\newcommand{\algorithmicinitialize}{\textbf{Initialize:}}
\newcommand{\INITIALIZE}{\item[\algorithmicinitialize]}
\def\BibTeX{{\rm B\kern-.05em{\sc i\kern-.025em b}\kern-.08em
T\kern-.1667em\lower.7ex\hbox{E}\kern-.125emX}}
\theoremstyle{plain}
\newtheorem{theorem}{Theorem}
\newtheorem{lemma}[theorem]{Lemma}
\newacronym{CF-mMIMO}{CF-mMIMO}{cell-free massive MIMO}
\newacronym{QoS}{QoS}{quality of service}
\newacronym{DSP}{DSP}{digital signal processing}
\newacronym{VLSI}{VLSI}{very large scale integration}
\newacronym{TX}{TX}{transmit}
\newacronym{RX}{RX}{receive}
\newacronym{IoT}{IoT}{Internet of Things}
\newacronym{SDR}{SDR}{semi-definite relaxation}
\newacronym{EVD}{EVD}{eigenvalue decomposition}
\newacronym{LDT}{LDT}{Lagrangian Dual Transform}
\newacronym{QT}{QT}{Quadratic Transform}
\newacronym{CDF}{CDF}{cumulative distribution function}
\newacronym{AP}{AP}{access point}
\newacronym{SINR}{SINR}{signal to interference-plus-noise ratio}
\newacronym{SIC}{SIC}{successive interference cancellation}
\newacronym{CSI}{CSI}{channel state information}
\newacronym{LoS}{LoS}{line-of-sight}
\newacronym{NLoS}{NLoS}{non-LoS}
\newacronym{MIMO}{MIMO}{multiple-input multiple-output}
\newacronym{MISO}{MISO}{multiple-input single-output}
\newacronym{SIMO}{SIMO}{single-input multiple-output}
\newacronym{SISO}{SISO}{single-input single-output}
\newacronym{MU}{MU}{multi-user}
\newacronym{JCAS}{JCAS}{joint communication and sensing}
\newacronym{JCR}{JCR}{joint communications and radar}
\newacronym{ISAC}{ISAC}{integrated sensing and communications}
\newacronym{3D}{3D}{three-dimensional}
\newacronym{2D}{2D}{two-dimensional}
\newacronym{1D}{1D}{one-dimensional}
\newacronym{ROI}{ROI}{region of interest}
\newacronym{mmWave}{mmWave}{millimeter-wave}
\newacronym{MF}{MF}{matched-filter}
\newacronym{SotA}{SotA}{state-of-the-art}
\newacronym{AWGN}{AWGN}{additive white Gaussian noise}
\newacronym{BS}{BS}{base station}
\newacronym{UE}{UE}{user equipment}
\newacronym{wlg}{w.l.g.}{without loss of generality}
\newacronym{CLT}{CLT}{central limit theorem}
\newacronym{PDF}{PDF}{probability density function}
\newacronym{ICI}{ICI}{inter-carrier interference}
\newacronym{BER}{BER}{bit error rate}
\newacronym{DoF}{DoF}{degrees-of-freedom}
\newacronym{VGA}{VGA}{vector Gaussian approximation}
\newacronym{FD}{FD}{full-duplex}
\newacronym{FP}{FP}{fractional programming}
\newacronym{CC}{CC}{communication-centric}
\newacronym{RC}{RC}{raised-cosine}
\newacronym{RRC}{RRC}{root raised-cosine}
\newacronym{6G}{6G}{sixth-generation}
\newacronym{V2X}{V2X}{vehicle-to-everything}
\newacronym{LEO}{LEO}{low-earth orbit}
\newacronym{I/O}{I/O}{input-output}
\newacronym{CE}{CE}{channel estimation}
\newacronym{ICC}{ICC}{integrated communication and computing}
\newacronym{ISCC}{ISCC}{integrated sensing, communications and computing}
\newacronym{PAM}{PAM}{pulse amplitude modulation}
\newacronym{iid}{i.i.d.}{independent and identically distributed}
\newacronym{MEC}{MEC}{mobile edge computing}
\newacronym{REMS}{REMS}{reconfigurable electromagnetic structure}
\newacronym{D-RIS}{D-RIS}{diagonal reconfigurable intelligent surface}
\newacronym{BD-RIS}{BD-RIS}{beyond-diagonal reconfigurable intelligent surface}
\newacronym{RIS}{RIS}{reconfigurable intelligent surface}
\newacronym{RE}{RE}{reflective element}
\newacronym{MRT}{MRT}{maximum ratio transmission}
\newacronym{ZF}{ZF}{zero forcing}
\newacronym{SVD}{SVD}{singular value decomposition}
\newacronym{CGA}{CGA}{conjugate gradient ascent}
\newacronym{QCQP}{QCQP}{quadratic constraint quadratic programming}
\newacronym{MMSE}{MMSE}{minimum mean square error}
\newacronym{RBD-RIS}{RBD-RIS}{reciprocal BD-RIS}
\newacronym{mMIMO}{mMIMO}{massive MIMO}
\newacronym{NRBD-RIS}{NRBD-RIS}{non-reciprocal BD-RIS}
\newacronym{CG}{CG}{conjugate gradient}
\newacronym{pp-ADMM}{pp-ADMM}{partially proximal alternating direction method of multipliers}
\newacronym{PDD}{PDD}{penalty dual decomposition}
\begin{document}
\title{Reciprocal Beyond-Diagonal Reconfigurable Intelligent Surface \!\!\! (BD-RIS):  \,Scattering\,Matrix\,Design\,via\,Manifold\,Optimization \!\!\!}

% \title{\HUGE Design of Reciprocal  Beyond-Diagonal Reconfigurable Intelligent Surface (RBD-RIS) Scattering Matrix via Manifold Optimization}

% \title{\HUGE{\color{blue} Reciprocal Beyond-Diagonal Reconfigurable Intelligent Surfaces\\ for Multi-user MIMO Scenarios}}

\author{
\IEEEauthorblockN{
~Marko Fidanovski\textsuperscript{\orcidlink{0009-0005-2926-1604}},~\IEEEmembership{Graduate Student Member,~IEEE},\\
~Iv\'{a}n Alexander Morales Sandoval\textsuperscript{\orcidlink{0000-0002-8601-5451}},~\IEEEmembership{Graduate Student Member,~IEEE},
~Hyeon Seok Rou\textsuperscript{\orcidlink{0000-0003-3483-7629}},~\IEEEmembership{Member,~IEEE}, ~Giuseppe Thadeu Freitas de Abreu\textsuperscript{\orcidlink{0000-0002-5018-8174}},~\IEEEmembership{Senior Member,~IEEE}, 
~Emil Björnson\textsuperscript{\orcidlink{0000-0002-5954-434X}},~\IEEEmembership{Fellow,~IEEE}.
}
\thanks{M. Fidanovski, I. A. M. Sandoval, H. S. Rou, G. T. F. de Abreu are with the School of Computer Science and Engineering, Constructor University (previously Jacobs University Bremen), Campus Ring 1, 28759 Bremen, Germany (emails: \{mfidanovski, imorales, hrou, gabreu\}@constructor.university).}
\thanks{E. Björnson is with the Department of Computer Science, KTH Royal Institute of Technology, Stockholm, Sweden (email: emilbjo@kth.se). He was supported by SweWIN and Digital Futures.}
\vspace{-3ex}
}

\markboth{To be submitted to the IEEE Transactions on Wireless Communications, 2025}%
{How to Use the IEEEtran \LaTeX \ Templates}

\maketitle

\begin{abstract}

\Acp{BD-RIS} are emerging as a transformative technology in wireless communications, enabling enhanced performance and \ac{QoS} of wireless systems in harsh urban environments due to their relatively low cost and advanced signal processing capabilities.
Generally, \ac{BD-RIS} systems are employed to improve robustness, increase achievable rates, and enhance energy efficiency of wireless systems in both direct and indirect ways.
The direct way is to produce a favorable propagation environment via the design of optimized scattering matrices, while the indirect way is to reap additional improvements via the design of \ac{MIMO} beamformers that further exploit the latter ``engineered'' medium.
In this article, the problem of sum-rate maximization via \ac{BD-RIS} is examined, with a focus on feasibility, namely low-complexity physical implementation, by enforcing reciprocity in the \ac{BD-RIS} design in a manner that adheres to the geometry of the manifold of symmetric matrices.
To that end, the sum-rate objective is transformed into a quadratic function via \ac{FP}, augmented via the also quadratic reciprocity constraint in the form of a regularization term, while the unitary constraint is dealt with via a manifold optimization framework.
Simulation results demonstrate the effectiveness of the proposed method in outperforming current \ac{SotA} approaches in terms of sum-rate maximization.
%%
% Then, in the accompanying Part II of this work, further improvements in performance are obtained through the corresponding design of \ac{TX} beamforming matrices.
% % 
% The combined gain of both approaches is found to be such that partially-connected {\color{blue}\acp{RBD-RIS}} of smaller group sizes outperform \ac{SotA} methods of larger group connectivity.
\end{abstract}

\begin{IEEEkeywords}
Beyond-diagonal reconfigurable intelligent surface (BD-RIS), manifold optimization, sum-rate maximization, reciprocal scattering matrix.
\end{IEEEkeywords}

\glsresetall

\section{Introduction}
\IEEEPARstart{I}{ntelligent} metasurfaces, or \acp{RIS}, are an emerging resource in the area of wireless communications, offering a new paradigm for controlling foundational features of electromagnetic fields to enhance system performance \cite{BjornsonSPM2022, DiRenzoJSAC2020, LiCM2024, BasarVTM2024, ElMossallamyTCCN2020, LiArX2025}.
%BjornsonNP2024, KhanN2025
%%
These surfaces consist of individual elements, $i.e.$, passive circuits, with controllable properties that allow them to filter and reradiate the incoming signals without amplification, which is accomplished by implementing delays, reducing the amplitude, or changing wave polarization via the circuit's local reprogrammable impedance.

Thanks to their low-power and reconfigurable nature, the so-called wave-domain signal processing approach offered by \acp{RIS} offer several key advantages over classical analog/digital signal processing, including ease of deployment and substantial reductions in end-to-end latency and processing delay \cite{LiuCST2021}.
%
%\subsection{Related Work}

Motivated by these advantages, \acp{RIS} have attracted enormous attention as an effective solution for enhancing wireless channel quality and consequently the communication performance of \ac{MISO}, \ac{MIMO}, and \ac{ISAC} systems \cite{BjornsonSPM2022}.
Indeed, intelligent surfaces revolutionize wireless systems design by allowing dynamic reconfiguring of the propagation environment, achieving substantial gains without the need for additional active radio chains \cite{NeriniTWC2023, RanasingheArx2025_FIM
%, RanasingheArx2025_DD, 
%RanasingheArx2025_Param
}.

Early research on the approach predominantly focused on \ac{D-RIS} architectures, whose scattering matrices enabled simpler analysis and straightforward implementation \cite{ShtaiwiTC2023 ,YuICCC2019, HuangTWC2019,WuTWC2019,
%WuTC2020, 
RexhepiArX2025, GuoTWC2020, LiArX2023 
%LiuJSTSP2022
}.
Their use has been extensively studied across various communication settings under various design objectives, including sum-rate maximization 
\cite{ShtaiwiTC2023},  
%%
% Eric: Should put another RIS paper that only focuses on sum-rate maximisation. 
%%
spectral efficiency improvement \cite{YuICCC2019}, energy efficiency enhancement \cite{HuangTWC2019}, transmit power reduction \cite{WuTWC2019}, and physical layer security \cite{RexhepiArX2025}.
%, \cite{WuTC2020}
%
To mention a few illustrative examples, in \cite{GuoTWC2020} a weighted sum-rate maximization problem under perfect and imperfect \ac{CSI} was proposed, with algorithms developed for each case. 
Alternatively, in \cite{LiArX2023}, an active \ac{RIS}-aided downlink rate-splitting multiple access system is proposed, and its sum-rate performance was analyzed via a novel algorithm.
To this extent, a joint optimization problem was formulated to design the \ac{TX} beamforming and rate allocation vectors together with the \ac{RIS} scattering matrix.
The approach involves a two-stage design, where \ac{FP} techniques were used in combination with \ac{QCQP} methods to optimize the respective variables.
Another excellent example is \cite{RexhepiArX2025}, whereby the authors utilize the \ac{D-RIS} in an \ac{ISAC} framework, applying manifold optimization to jointly improve sum-rate performance and enhance security against eavesdropping.

Although the conventional diagonal architecture offers analytical convenience, it also imposes structural constraints that limit the \ac{DoF}, restricting its ability to manipulate the signal space.
To overcome these limitations, subsequent studies have focused on \ac{BD-RIS} architectures, which significantly increase the \ac{DoF} available for signal shaping \cite{JuniorJCS2024, KhanPIIC2023, LiTWC2023, YahyaOJCS2024, FangCL2024, ZhouTWC2024, 
%NeriniTWC2024,
LiuArx2025, LiArx2025_NRBD-RIS, 
%BjornsonArx2025, 
Demir2024Wideband}. 
However, the increase in \ac{DoF}, achieved through enhanced connectivity of the \ac{BD-RIS}, is generally accompanied by higher hardware complexity and implementation cost. 
As a result, a substantial body of literature has investigated the trade-off between achievable performance and hardware complexity in \ac{BD-RIS} architectures \cite{Nerini2023ParetoBDRIS, Wu2025BeyondDiagonalRIS}.

In this context, advanced optimization techniques have been employed to effectively exploit the additional \ac{DoF} while respecting structural constraints. 
Both \ac{FP} and first-order methods, including projected gradient and accelerated gradient schemes \cite{YangTWC2024}, have been widely adopted in \ac{RIS} optimization problems to improve convergence speed and computational efficiency. 
While these approaches typically enforce constraints through Euclidean projections onto feasible sets, manifold optimization instead leverages the intrinsic geometry of the constraint space. 
By operating directly on smooth manifolds via tangent-space projections and retraction steps \cite{AbsilPSP2008,boumal2023introduction}, it maintains feasibility along the optimization trajectory while explicitly accounting for the geometry of the constraint set.
Junior \emph{et al.} \cite{JuniorJCS2024} provide a comprehensive tutorial on manifold optimization techniques and their application in \ac{RIS}-aided \ac{mMIMO} systems, where it is shown that adhering to the manifold geometry of the constraints consistently yields better solutions than standard optimization methods, which motivates the approach adopted in this article.
Several recent works have applied these principles to specific \ac{BD-RIS} design problems, employing manifold optimization to enhance secrecy rate \cite{KhanPIIC2023} and to jointly configure the scattering matrix and \ac{TX} precoder \cite{LiTWC2023}.
Both these as well as similar studies such as \cite{LiTVT2023,LiTWC2024}, however, are conducted for non-reciprocal settings, without enforcing symmetry.

Enforcing a symmetric scattering matrix ensures compatibility with physically realizable passive architectures, avoiding the need for active components, magnetic biasing, or non-reciprocal circuit elements, which increase hardware complexity, power consumption, and implementation cost \cite{LiArX2025, LiuArx2025}.
One way to enforce the unitary and symmetry constraints inherent to reciprocal BD-RIS is through a \ac{PDD} framework, as proposed in \cite{ZhouTWC2024}, where the scattering matrix and transmit beamformers are jointly optimized for both sum-rate maximization and transmit power minimization problems.
While effective, the approach relies on auxiliary-variable decoupling and a double-loop procedure, is primarily tailored to fully- and group-connected BD-RIS architectures, and incurs relatively high computational complexity. 
To address these limitations, \cite{ZheyuArxiv2024} develops an architecture-independent framework based on \ac{pp-ADMM}, which reformulates the BD-RIS constraints via the admittance matrix and enables efficient closed-form or low-complexity updates at each iteration. 
This general and computationally efficient framework is adopted in this work as a benchmark for performance and complexity comparisons.

%In \cite{KhanPIIC2023}, the authors construct a joint optimization problem with the aim to enhance the secrecy rate of the system.
%
%Similarly to the work done in this article, they adopt manifold optimization to solve their optimization problem with the unitary constraint.
%
%However, the optimization problem formulated in this article involves the symmetry constraint, which adds another nonconvex constraint, making the problem more challenging and physically feasible.

%The authors of \cite{LiTWC2023} formulate a microwave network theory based \ac{BD-RIS} aided communication model. 
%
%Subsequently, they propose an algorithm which allows them to enhance the sum-rate of the system by jointly designing the \ac{BD-RIS} scattering matrix and the \ac{TX} precoder.
%
%Similarly to the aforementioned work, the authors optimize the scattering matrix within a manifold optimization framework while assuming a non-reciprocal system, $i.e.$ without enforcing the symmetry constraint.

Alternative \ac{BD-RIS} design strategies extend beyond direct optimization from the sum-rate objective and instead focus on hybrid approaches.
For example, in \cite{YahyaOJCS2024}, two methods for designing \ac{BD-RIS} scattering matrices are proposed.
The first method seeks to maximize the total signal power received by the users following a passive \ac{MRT} strategy, while the second approach aims at nullifying the multi-user interference via a passive \ac{ZF} strategy. 
%%
% Their implementation obtains an initial scattering matrix using the passive \ac{MRT} method, which is then iteratively refined through the Nulling/\ac{ZF} approach to progressively suppress interference.
%%
% These approaches involve analyzing the \ac{SINR} at $U_k$, and configuring the scattering matrix to either maximize the power,
% %($i.e.$, the numerator of the \ac{SINR})
% or nullify the interference.
% %($i.e.$, the summation in the denominator). 
In contrast, this work adopts a fundamentally different strategy: rather than decoupling the design into separate \ac{MRT} or \ac{ZF} criteria, the scattering matrix is optimized with respect to the sum-rate objective.
This direct formulation accounts for both signal enhancement and interference suppression, enabling these aspects to be addressed simultaneously within a unified and coherent optimization framework.

Another notable example is presented by Fang \emph{et al.} \cite{FangCL2024}, where the authors propose an alternative methodology---leveraging convex relaxation techniques---to address the non-convex \ac{BD-RIS} problem.
Specifically, they reformulate the non-convex constraints into a convex set, enabling efficient solution of the relaxed problem.
The solution is then mapped back to the feasible region of the original non-convex set using their proposed symmetric unitary projection.

More recently, research has shifted toward \ac{NRBD-RIS} designs, particularly within a full-duplex framework \cite{LiuArx2025,LiArx2025_NRBD-RIS}.
For example, Liu \emph{et al.} \cite{LiuArx2025} investigate their application in wireless circulator systems, demonstrating that properly optimized non-reciprocal scattering matrices yield improvements in both spectral efficiency and interference management compared to existing reciprocal structures, further expanding the applicability of \ac{BD-RIS} in wireless networks.
However, the practical realization of non-reciprocal \ac{BD-RIS} entails increased hardware complexity and resource requirements, which may limit their suitability in large-scale or cost-sensitive deployments.
Although non-reciprocal designs can be advantageous in specific scenarios, reciprocal \ac{BD-RIS} architectures remain highly relevant for the broader range of applications where simplicity, scalability, and energy efficiency are essential.

All in all, we recognize the maximization of sum-rate via the design of reciprocal \ac{BD-RIS} scattering matrices as a \emph{dual manifold} problem which, to the best of our knowledge, has not yet been addressed as such via an adequate manifold optimization method, likely due to the challenge of dealing with two manifolds of different geometries simultaneously.
Recently, this challenge was addressed in \cite{santamaria2026BDRIS}, where a Riemannian optimization algorithm operating on the solution space defined by the intersection manifold of unitary and symmetric matrices was proposed.\footnote{We emphasize that a pre-print of \cite{santamaria2026BDRIS} first appeared only in Jan. 2026 \url{https://arxiv.org/abs/2601.13877}, after the submission of this article which was available at \url{http://www.arxiv.org/abs/2509.20246} since Sep. 2025.} 
That formulation is, however, restricted to the fully-connected architecture and primarily considers the single-user case, where inter-user interference is absent.

In contrast to the latter, in this article, the dual-manifold nature of the reciprocal \ac{BD-RIS} scattering matrix design problem is addressed under a general setting\footnote{For the sake of simplicity and to allow direct comparison with \ac{SotA} methods we explicitly focus on \ac{MU}-\ac{MISO} systems, for which closed-form gradients are derived, while emphasizing that the proposed method applies directly to the general \ac{MU}-\ac{MIMO} case. See Subsection \ref{sec:MUMIMO_Generalization} for more details.} by incorporating the reciprocity constraint as a quadratic regulation term into the sum-rate objective transformed also into a quadratic function via \ac{FP}, while enforcing the unitarity constraint by solving the resulting regularized problem via manifold optimization over the Stiefel manifold.
Our main contributions are summarized as follows:
\begin{itemize}
\item {\bf Novel technique with improved performance:} A novel method is proposed to design the scattering matrix of \ac{RBD-RIS}, under which both the symmetry and unitary constraints are sensibly enforced, in a manner that accommodates both manifold geometries.
In particular, in the proposed method, the quadratic geometry of the symmetry constraint manifold is incorporated into the quadratic sum-rate objective reformulated via \ac{FP}, while the geometry of the Stiefel unitary constraint manifold is dealt with via manifold optimization.
The new method, which has complexity similar to that of \ac{SotA} techniques, is shown via simulations to outperform the latter under equivalent conditions.
\item {\bf Original closed-form expressions:} While the method is applicable to general \ac{MU}-\ac{MIMO} systems, a closed-form expression for the gradient of the proposed regularized and \ac{FP}-reformulated objective function is provided for the \ac{MU}-\ac{MISO} setup, enabling the implementation of the method via \ac{CGA} in that case. 
\item {\bf High hardware compatibility:} Thanks to the above, the proposed scattering matrix design of \ac{RBD-RIS} under \ac{MU}-\ac{MISO} settings can be implemented in a highly controlled manner that facilitates efficient hardware design \cite{parhi1999vlsi}.
In particular, with the exception of an Armijo line search, the key steps of the method, namely, the tangential projection, gradient evaluation, and updating of \ac{FP} variables,  are achieved via successive evaluation of closed-form expressions, enabling dedicated signal processing hardware to be optimized.\footnote{It is a well-known and widely recognized principle in \ac{DSP} and \ac{VLSI} designs that closed-form algorithms reduce to a fixed sequence of arithmetic operations, favoring the stability, reliability, power consumption and efficacy of hardware implementation, as a result of the corresponding deterministic pipelines, parallelization and static scheduling \cite{parhi1999vlsi}.}
\end{itemize}

\textit{Organization:} Section \ref{sec:sysmodel} introduces the \ac{BD-RIS}-aided \ac{MU}-\ac{MISO} system model, presents the different architectures considered, and formulates the problem addressed in the article. 
Next, a general solution to the first stage of the optimization problem, namely the scattering matrix design, is provided in Section~\ref{sec:smdesign}.
Simulation results are offered in Section~\ref{sec:simresults}, which help evaluate the performance of the proposed designs.
Finally, a brief conclusion is given in Section~\ref{sec:conc}, followed by Appendix \ref{app:lemma1} and \ref{app:lemma2}, where proofs of two Lemmas are detailed.

\textit{Notation:} Unless otherwise specified, $\mathbf{X}$ and $\mathbf{x}$ denote matrices and vectors. 
The absolute value, $\ell^2$ and Frobenius norm are denoted by $|\cdot|$,  $\|\cdot\|_2$, $\|\cdot\|_F$. 
The transpose and the Hermitian transpose are denoted by $(\cdot)^T$ and $(\cdot)^H$.
In addition, $\mathbf{X}_{i,j}$ denotes the $i$-th row and $j$-th column element of the matrix $\mathbf{X}$, $\mathbf{X}_{i:\bar{i},j:\bar{j}}$ extracts the elements from the $i$-th to the $\bar{i}$-th row, and $j$-th to $\bar{j}$ column from the matrix $\mathbf{X}$.
Furthermore, $\text{diag}(\mathbf{x})=(x_1,x_2,\dots,x_k)$ denotes a diagonal matrix, where the main diagonal is $\mathbf{x}$, and $\text{blkdiag}\left(\mathbf{X}_1, \mathbf{X}_2, \dots, \mathbf{X}_k\right)$ denotes a $G\times G$ block diagonal matrix with off diagonal elements $0$, where the blocks are $\mathbf{X}_1, \mathbf{X}_2, \dots, \mathbf{X}_k$.
The sets of complex and real numbers are denoted by $\mathbb{C}$ and $\mathbb{R}$, while the real and imaginary values of a complex number are denoted by $\Re\{\cdot\}$ and $\Im\{\cdot\}$, respectively.
The circular-symmetric complex normal random variable with $\sigma^2$ variance is denoted as $\mathcal{CN}(0, \sigma^2)$.

\section{System Model}
\label{sec:sysmodel}

%\subsection{System and Channel Model}

Consider a \ac{BD-RIS} aided downlink \ac{MU}-\ac{MISO} system as illustrated in Figure~\ref{fig:system model}, where a \ac{BS} with $N$ \ac{TX} antennas serves $K$ single \ac{RX} antenna users\footnote{This scenario is considered due to its analytical tractability and simplicity, while enabling direct comparisons with \ac{SotA} alternatives. The methods can, however, be easily generalized to \ac{MU}-\ac{MIMO} settings, as described in Subsection \ref{sec:MUMIMO_Generalization}.}
%denoted as $U_k$, $\forall k \in \{1,2,\dots,K\}$, 
with the help of a \ac{BD-RIS} consisting of $R$ \acp{RE}, 
The {associated} \ac{BD-RIS} scattering matrix is represented by $\mathbf{\Theta} \in \mathbb{C}^{R \times R}$.
The channel linking the \ac{BS} and the \ac{BD-RIS} ($i.e.$, the \ac{BS}-{to}-\ac{BD-RIS} channel)  is denoted as $\mathbf{H}_{\mathrm{TX}} \in \mathbb{C}^{R \times N}$, and the {channel vector connecting the \ac{BD-RIS} and the $k$-th user} is $\mathbf{h}_k \in \mathbb{C}^{R \times 1}$.

The transmit signal vector is given by $\mathbf{x}=\mathbf{V}\mathbf{s}$, where the information symbols $\mathbf{s} \in \mathbb{C}^{K \times 1}$ satisfy $\mathbb{E}[\mathbf{s}\mathbf{s}^H]=\mathbf{I}$, and the beamforming matrix $\mathbf{V} \in \mathbb{C}^{N \times K}$ meets the power constraint $\|\mathbf{V}\|_F^2 \leq P_{\mathrm{max}}$, with $P_{\mathrm{max}}$ being the maximum transmit power at the \ac{BS}. 
It is assumed that the {channel between the \ac{BS} and the $k$-th user} is blocked\footnote{{This assumption captures the most challenging and conceptually relevant scenario, where the impact and benefits of RIS-assisted propagation are most clearly observed.
In particular, it highlights coverage extension and blockage mitigation, while reducing the need for network-level optimization measures such as deploying additional base stations. 
As a result, the blocked-user assumption is widely adopted in RIS-related studies, enabling} direct comparison of the proposed technique with \ac{SotA} methods, $e.g.$ \cite{YahyaOJCS2024,ZheyuArxiv2024}.
{Importantly, this assumption is not required by our proposed approach}, since the scattering matrix design depends only on the cascaded BS-{to-}RIS-to-user channels. {Nevertheless, the} overall performance under {this setting} relies on the BS-{to-}RIS channel having a sufficiently high rank, which holds only in rich-scattering, large-antenna setups, motivating an extension of the work under poorly scattered channels with \ac{LoS} links between the \ac{BS} and users.
Such an extension also requires, however, further work on the beamforming algorithm, thus falling beyond the scope of this article, and therefore will be pursued in a follow-up contribution.}, such that a focus is placed on the performance gains due to the \ac{BD-RIS}. 
Thus, the signal $r_k \in \mathbb{C}$ received at {user $k$} can be expressed as
\begin{equation}
\label{eq:rsig}
r_k = \mathbf{h}_k^T\mathbf{\Theta}\mathbf{H}_{\mathrm{TX}}\mathbf{x}+n_k,
\end{equation}
where $n_k \sim \mathcal{CN}(0, N_0)$ denotes \ac{AWGN} with power $N_0$. 

\begin{figure}[H]
\centering
% \begin{tcolorbox}[colback=blue!20, boxrule=0pt, arc=2mm, left=2mm, right=2mm, top=2mm, bottom=2mm]
\includegraphics[width=1\linewidth]{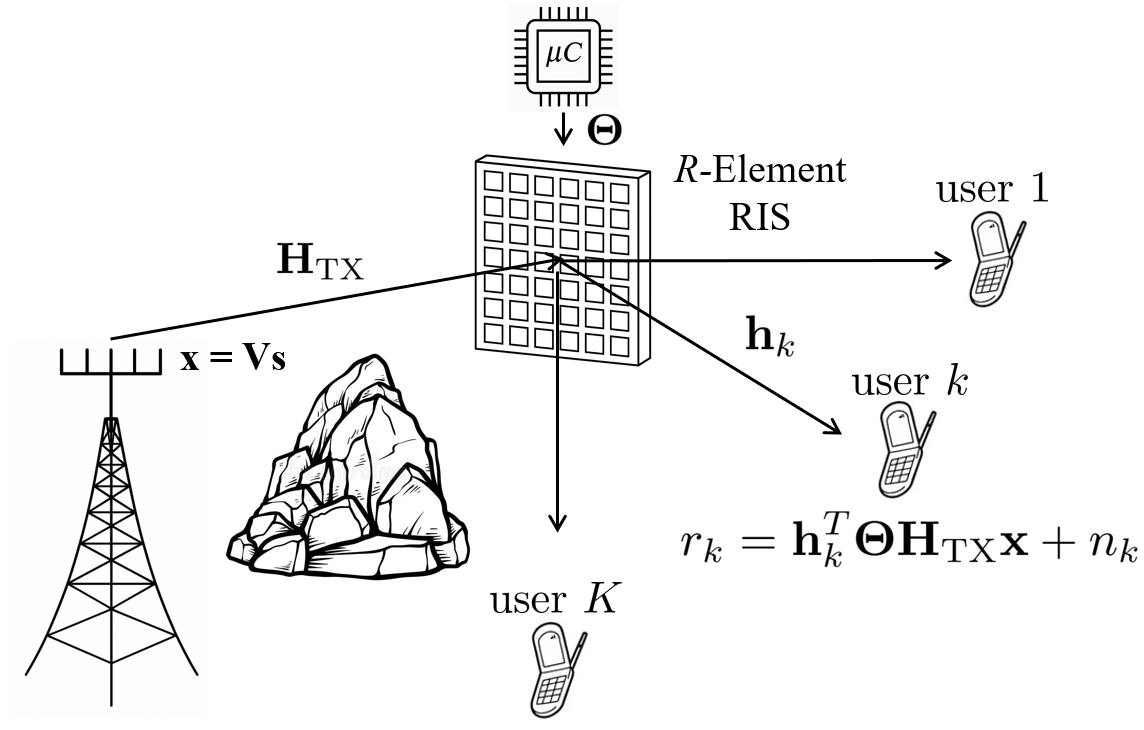}
\caption{Illustration of the system model, where a \ac{BS} with $N$ \ac{TX} antennas serves $K$ single-antenna users through an $R$-Element \ac{RBD-RIS}, without a \ac{LoS} link$^2$ between the BS and the users.}
\label{fig:system model}
\vspace{-2ex}
% \end{tcolorbox}
\end{figure}

The compact vector representation in~\eqref{eq:rsig} is inspired by the formulation used in \cite{YahyaOJCS2024}, and is given as
\begin{subequations}
\begin{gather}
\mathbf{r} = [r_1, r_2, \dots, r_K]^T, \\
\mathbf{n} = [n_1, n_2, \dots, n_K]^T, \\
\mathbf{V} = [\mathbf{v}_1, \mathbf{v}_2, \dots, \mathbf{v}_K], \\
\mathbf{H}_{\mathrm{RX}} = [\mathbf{h}_1, \mathbf{h}_2, \dots, \mathbf{h}_K]^T,  \\
\mathbf{H}_{\mathrm{TX}}= \left[ \mathbf{w}_1, \mathbf{w}_2, \dots, \mathbf{w}_N \right], \\ 
\mathbf{r} = \mathbf{H}_{\mathrm{RX}} \boldsymbol{\Theta} \mathbf{H}_{\mathrm{TX}} \mathbf{x} + \mathbf{n}, \label{eq:rvec}
\end{gather}
\end{subequations}
where $\mathbf{r}\in\mathbb{C}^{K\times 1}$, $\mathbf{H}_{\mathrm{RX}} \in \mathbb{C}^{K \times R}$, $\mathbf{w}_n \in \mathbb{C}^{R\times 1}$ {with} $n\in \{1,2,\dots,N\}$, {and $\mathbf{v}_m \in \mathbb{C}^{N \times 1}$ with $m\in \{1,2,\dots,K\}$}, denote the received signal vector, \ac{BD-RIS}-{to-user $k$} channel matrix, \ac{BS}-{to-}\ac{BD-RIS} channel vector, and beamforming vector, respectively.

\vspace{-3ex}

\subsection{Scattering Matrix Definition}

Depending on the reconfigurable impedance network of the \ac{BD-RIS}, three different architectures are considered, namely:
\begin{enumerate}
\item Single-connected \ac{BD-RIS}: Equivalent to the conventional D-RIS, where each \ac{RE} is independently connected to a single grounded reconfigurable impedance with no interconnections among \acp{RE}.
Consequently, the resulting scattering matrix is inherently diagonal, which naturally satisfies the symmetry constraints, while also being unitary due to the lossless nature{\footnote{{According to microwave network theory, a lossless reconfigurable impedance network must satisfy a unitary scattering matrix condition, thereby ensuring energy conservation \cite{LiCM2024}.
Furthermore, the reciprocity property, which enforces symmetry of the BD-RIS scattering matrix, follows directly from electromagnetic reciprocity and microwave network theory.
In particular, any passive, linear and time-invariant structure exhibits symmetric transmission characteristics, $i.e.$, the coupling between any two ports of the \ac{BD-RIS} is identical in both directions, which is mathematically captured by the symmetric scattering matrix \cite{PozarBook2021}.}}.}
The aforementioned constraints are described as
\vspace{-1ex}
\begin{eqnarray}
&\mathcal{S}_{\text{SC}_1} = \left\{ \mathbf{\Theta} : [\mathbf{\Theta}]_{i,j} = 0,\ \forall i \ne j \right\},&\\
&\mathcal{S}_{\text{SC}_2} = \left\{ \mathbf{\Theta} : \left|[\mathbf{\Theta}]_{i,j}\right| = 1,\ \forall i = j \right\},&
\end{eqnarray}
where $i,j \in \{1, 2, \dots, R\}$.  

\item Fully-connected \ac{BD-RIS}: In this architecture, every \ac{RE} is connected to all others via a reconfigurable impedance network, allowing a wave incident on one element to be reflected by all others.
As a result, the corresponding scattering matrix is full and---similarly to the other architectures---must satisfy both the symmetry and the unitary constraint, which for this case are given as
\vspace{-1ex}
\begin{eqnarray}
&\mathcal{S}_{\text{FC}_1} = 
\left\{ 
  \mathbf{\Theta} : 
  \mathbf{\Theta} = \mathbf{\Theta}^T 
\right\},&\\
&\mathcal{S}_{\text{FC}_2} = 
\left\{ 
\mathbf{\Theta} : 
\mathbf{\Theta} \mathbf{\Theta}^H = \mathbf{I} 
\right\}.&
\end{eqnarray}

\item Group-connected \ac{BD-RIS}: In this intermediate architecture the \acp{RE} are partitioned into $G$ disjoint groups, each consisting of $R_G = \frac{R}{G}$ elements. 
Within each group, all \acp{RE} are interconnected, enabling waves impinging on one element to be reflected by all others in the same group.
Apart from the intra-group connections, no other link exists.
%as there are no connections to \acp{RE} outside of that group.
%%
Consequently, the scattering matrix can be represented as a block-diagonal matrix by ordering its elements group-by-group, such that each block corresponds to one group and must independently satisfy the unitary and symmetry constraints. 
Notably, the group-connected architecture generalizes the previous two cases, reducing to the fully-connected case when $G=1$ and to the single-connected case when $G=R$.
Denoting the scattering matrix of groups $g \in \{1, 2, \ldots, G\}$
by $\mathbf{\Theta}_g \in \mathbb{C}^{R_G \times R_G}$, {the overall structure of the group-connected architecture is expressed by
\begin{equation}
  \mathbf{\Theta} = \mathrm{blkdiag}(\mathbf{\Theta}_1, \ldots, \mathbf{\Theta}_G),
\end{equation}
and} the constraints on the group-connected BD-RIS scattering matrix {are given by}
\begin{gather}
\hspace{-5ex} {\mathcal{S}_{\text{GC}_1} \!\!=\!
\left\{ 
  \mathbf{\Theta} : \mathbf{\Theta}_g \!=\! \mathbf{\Theta}_g^T, \forall g
\right\}\!,\!\!\! }
\end{gather}
\begin{equation}
\mathcal{S}_{\text{GC}_2} = 
\left\{ 
  \mathbf{\Theta} : 
  \mathbf{\Theta}_g \mathbf{\Theta}_g^H = \mathbf{I}_{R_G},\; \forall g 
\right\}.
\end{equation}

{Here $\mathbf{\Theta}_g$ is formally defined as
\begin{equation}
\label{eq:group_scattering}
\mathbf{\Theta}_g = \mathbf{\Theta}_{[R_G(g-1)+1\,:\,gR_G,\;\; R_G(g-1)+1\,:\,gR_G]},
\end{equation}
where $g \in \{1,2,\dots,G\}$. 
Figure \ref{fig:bp_bdsris}, seen at the top of the next page, provides an illustrative example demonstrating the impact of beyond-diagonal connectivity on the scattering behavior of the \ac{BD-RIS}.}
\end{enumerate}

\begin{figure*}
\centering
% \begin{tcolorbox}[colback=blue!20, boxrule=0pt, arc=2mm, left=2mm, right=2mm, top=2mm, bottom=2mm]
\subfloat[Single-Connected]{
\includegraphics[width=0.32\textwidth]{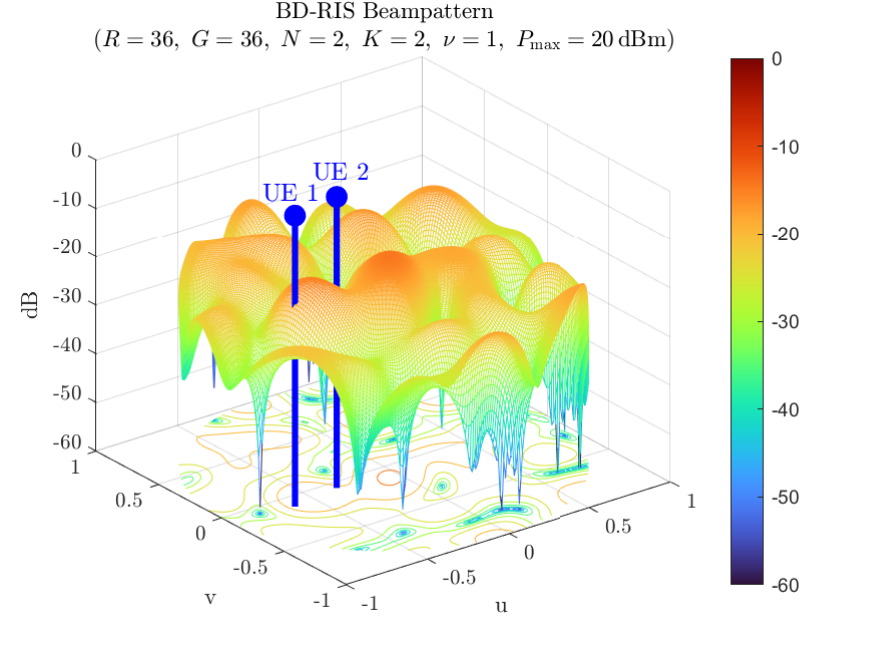}
}
\subfloat[Group-Connected (group size 4)]{
\includegraphics[width=0.32\textwidth]{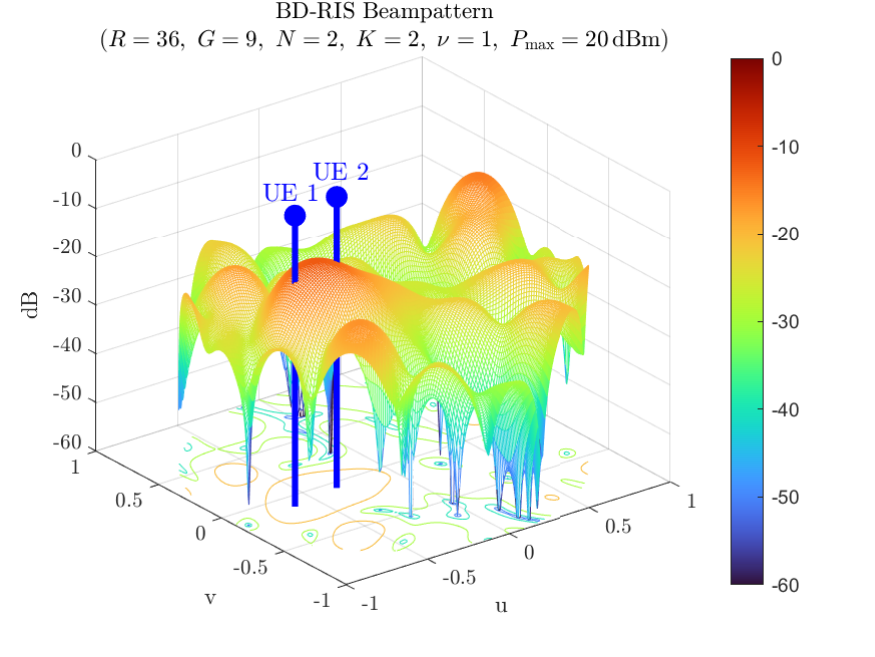}
}
\subfloat[Fully-connected]{
\includegraphics[width=0.32\textwidth]{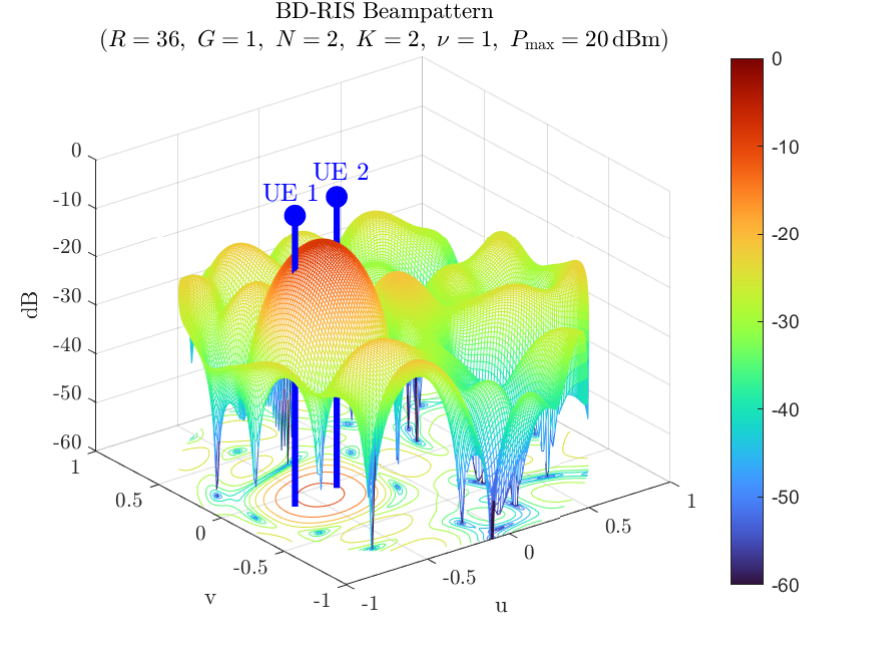}
}\quad\color{white}.
\vspace{-1ex}
\caption{{Beampattern illustrating the responses of a BD-RIS with different levels of connectivity. The plots are shown in terms of the auxiliary quantities $u\triangleq \sin(\phi)\cos(\theta)$ and $v \triangleq \sin(\phi)\sin(\theta)$, where $\phi$ and $\theta$ are the azimuth and elevation angles, swept in the ranges $[-\pi,\pi]$ and $[-\pi/2,\pi/2]$, respectively.
The blue lines correspond to the directions of two users, located at $(\phi_1, \theta_1) = (-40^\circ,10^\circ)$ and $(\phi_2, \theta_2) = (-20^\circ,10^\circ)$.}}
\label{fig:bp_bdsris}
% \end{tcolorbox}
\vspace{-2ex}
\end{figure*}

\vspace{-2ex}
\subsection{Problem Formulation}

Given $\mathbf{\Theta}$ and $\mathbf{V}$, and {assuming perfect \ac{CSI}\footnote{{Although idealized, the assumption of perfect \ac{CSI} is widely adopted in the RIS literature \cite{WuTWC2019,YahyaOJCS2024, ZheyuArxiv2024}, as it enables evaluation of the ``best-case'' performance of the proposed algorithm, while imperfect \ac{CSI} can only degrade performance.
Moreover, in block-static channel conditions where the channel remains constant over the data transmission interval, the channel coefficients can, in principle, be estimated with arbitrarily high accuracy given sufficient training resources. 
In such settings, performance analysis under perfect \ac{CSI} is consistent with the standard block-fading framework, where coding is performed over long coherence intervals.
In contrast, in rapidly varying fading environments, channel estimation errors become unavoidable and must be explicitly incorporated into the system design. 
Extensions to robust formulations under imperfect \ac{CSI} are feasible and have been extensively studied in the literature (see, e.g., \cite{AlmeidaTSP2025
,PapazafeiropoulosOJCS2024}). 
However, such developments fall beyond the scope of the present work.}}}, an achievable rate at {the $k$-th user} is $\log_2(1+\gamma_k)$, where the \ac{SINR}{\footnote{{For the extension to the general \ac{MU}-\ac{MIMO} case, the \ac{SINR} in \eqref{eq:sinr_scalar} would need to be revisited to incorporate receive combining at the user side.}}} is expressed as
\vspace{-1ex}
\begin{equation}
\gamma_k = 
\frac{ \left| \mathbf{h}_k^T \mathbf{\Theta} \mathbf{H}_{\mathrm{TX}} \mathbf{v}_k \right|^2 }
{ \sum\limits_{i \ne k} \left| \mathbf{h}_k^T \mathbf{\Theta} \mathbf{H}_{\mathrm{TX}} \mathbf{v}_i \right|^2 + N_0 }
\ .
\vspace{-1ex}
\label{eq:sinr_scalar}
\end{equation}

The corresponding sum-rate maximization problem is then
\vspace{-0.5ex}
\begin{subequations} \label{prob:P1}
\begin{align}
(\text{P1}): \quad \underset{\mathbf{V},\mathbf{\Theta}}{\mathrm{maximize}} \quad & \sum_k \log_2(1 + \gamma_k) \label{eq:P1_obj} \\
\text{subject to} \quad & \|\mathbf{V}\|_F^2 \leq P_{\mathrm{max}}, \label{eq:P1_power} \\
& \mathbf{\Theta} \in \mathcal{S}_{a_1}, \label{eq:P1_Sa1} \\
& \mathbf{\Theta} \in \mathcal{S}_{a_2}, \label{eq:P1_Sa2}
\end{align}
\end{subequations}
where {the constraint \eqref{eq:P1_power} corresponds to the total transmit power limitation at the \ac{BS}, ensuring that the Frobenius norm of the beamforming matrix $\mathbf{V}$ does not exceed the maximum allowable transmit power $P_{\mathrm{max}}$,} $a \in \{ \text{SC}, \text{FC}, \text{GC} \}$ denotes the connectivity level of the \ac{BD-RIS}, and {$\mathcal{S}_{a_1}$ and $\mathcal{S}_{a_2}$ represent the sets of scattering matrices that satisfy the reciprocal and lossless constraints, respectively.}
\newpage

The optimization problem (P1) is inherently non-convex due to the objective function in \eqref{eq:P1_obj}, and the strong coupling between the \ac{TX} beamforming and the scattering matrix.
Additionally, the constraints in \eqref{eq:P1_power}-\eqref{eq:P1_Sa2} are generally non-convex.
{In particular,} the constraint in \eqref{eq:P1_Sa2} represents a Stiefel manifold condition {\cite{AbsilPSP2008}, since it enforces orthonormality of the columns of the BD-RIS scattering matrix.}

Due to this coupling, a two-stage design is adopted to enhance the sum-rate.
In this article, the first stage of the optimization problem is addressed, namely, the design of the scattering matrix.
{The second stage, concerning the optimization of the transmit beamforming matrices, becomes particularly relevant in more general \ac{MU}-\ac{MIMO} settings where active beamforming plays a larger role in the achievable performance. 
Further discussion on this aspect, as well as on the extension to the general \ac{MU}-\ac{MIMO} case, is provided in the next subsection.}%%

{
% \vspace{-2ex}
\subsection{A Note on the General MU-MIMO Case}
\label{sec:MUMIMO_Generalization}

\vspace{1ex}

Before we proceed, let us clarify that the method proposed in this article can be directly applied to \ac{MU}-\ac{MIMO} systems with relatively few modifications described below.

Consider a \ac{MU}-\ac{CF-mMIMO} system with $L$ \acp{AP}, each equipped with $N_a$ \ac{TX} antennas, such that the total number of transmit antennas is $N_t = L\times N_a$, and $K$ users equipped with with $M$ \ac{RX} antennas, aided by a \ac{BD-RIS} with $R$ \acp{RE}. 
The corresponding channel linking the $l$-th AP to the $k$-th user can be defined as
\begin{equation}
\mathbf{E}_{l,k} = \mathbf{H}_{\mathrm{RX},k} \mathbf{\Theta} \mathbf{H}_{\mathrm{TX},l} \in \mathbb{C}^{M\times N_a},
\end{equation}
%\vspace{-0.5ex}
%
where $\mathbf{H}_{\mathrm{RX},k}$ and $\mathbf{H}_{\mathrm{TX},l}$ denote the \ac{BD-RIS}-to-user and \ac{AP}-to-\ac{BD-RIS} channel matrices, respectively.

Defining the equivalent channel to the $k$-th user as the concatenation of the channel matrices from all \acp{AP} to the $k$-th user, we write $\mathbf{E}_k \triangleq [\mathbf{E}_{1,k},\mathbf{E}_{2,k},\dots, \mathbf{E}_{L,k}] \in \mathbb{C}^{M\times N_t}$.
Accordingly, the complex baseband received signal at the $k$-th user, which generalizes \eqref{eq:rsig}, is given by

\quad\\[-3ex]
\begin{equation}
\mathbf{r}_k = \overbrace{\,\mathbf{E}_k \mathbf{V}_k \mathbf{s}_k}^{\text{Intended signal}}\; + \!\!\!\!\!\overbrace{\!\!\!\!\!\sum_{k' \in \mathcal{K}\backslash \{k\}} \!\!\!\!\!\! \mathbf{E}_k \mathbf{V}_{k'} \mathbf{s}_{k'}}^{\text{Downlink inter-user interference}} \!\! +  \mathbf{n}_k \in \mathbb{C}^{M\times 1},
\end{equation}
where $\mathbf{n}_k\sim \mathcal{CN}(0, N_0\mathbf{I})$.

For a given rate $\log_2 \text{det}(\mathbf{I}_M+\mathbf{\Gamma}_k)$, the \ac{SINR} at the $k$-th user, $\mathbf{\Gamma}_k$, generalizes \eqref{eq:sinr_scalar} and is expressed as 
\begin{subequations}
\begin{equation}
\label{eq:SINR_MIMO}
\mathbf{\Gamma}_k = \mathbf{V}_k^\mathrm{H} \mathbf{E}_k^\mathrm{H} \mathbf{\Psi}_k^{-1} \mathbf{E}_k \mathbf{V}_k \in \mathbb{C}^{M \times M},
\end{equation}
where
\begin{equation}
\label{eq:interf}
\mathbf{\Psi}_k \triangleq  \sum\limits_{k'\in \mathcal{K}\backslash \{k\}} \mathbf{E}_{k} \mathbf{V}_{k'} \mathbf{V}_{k'}^\mathrm{H} \mathbf{E}^\mathrm{H}_{k} + N_0  \mathbf{I}_M \mathbb{C}^{M \times M}
\end{equation}
\end{subequations}
is the interference-plus-noise covariance matrix, and $\mathbf{V}_k$ denotes the beamforming matrix for the $k$-th user.

From the above, it is evident that the proposed framework applies directly to general \ac{MIMO} scenarios, as previously mentioned.
We emphasize, however, such a setting would obscure the isolated impact of the \ac{BD-RIS} scattering matrix design.
In addition, a comprehensive treatment of the \ac{MIMO} case would not be complete without considering also the design of corresponding beamformers.
For these reasons, we leave the joint design of \ac{TX} beamforming and \ac{BD-RIS} scattering matrix for future work, and continue hereafter addressing the \ac{MU}-\ac{MISO} case, for simplicity, but without loss of generality.
}

% \vspace{-2ex}
\section{Stage 1: Scattering Matrix Design}
\label{sec:smdesign}

We will solve (P1) in an iterative manner.
To first design the scattering matrix, the impact of the \ac{BS} beamforming is fixed by initializing it with a \ac{MMSE}-based beamformer{\footnote{{Perfect CSI is assumed as discussed previously. 
However, MMSE precoding has been shown to be robust under imperfect CSI \cite{SavauxTVT2025}, so this assumption does not represent a fundamental weakness of the considered framework.}}}. 
Accordingly, \eqref{eq:rvec} is rewritten as
\begin{equation}
\mathbf{r}=\mathbf{H}_{\mathrm{RX}}\mathbf{\Theta}\mathbf{H}_{\mathrm{TX}}\mathbf{V}\mathbf{s}+\mathbf{n}
\triangleq \mathbf{H}_{\mathrm{RX}}\mathbf{\Omega}\mathbf{V}\mathbf{s}+\mathbf{n}
\triangleq \mathbf{E}\mathbf{V}\mathbf{s}+\mathbf{n},
\end{equation}
where $\mathbf{\Omega} = \mathbf{\Theta}\mathbf{H}_{\mathrm{TX}} = [\boldsymbol{\omega}_1,\dots, \boldsymbol{\omega}_N] \in \mathbb{C}^{R\times N}$ and $\mathbf{E} = \mathbf{H}_{\mathrm{RX}}\mathbf{\Omega}= [\mathbf{e}_1,\dots,\mathbf{e}_K]^T \in \mathbb{C}^{K\times N}$ denotes the equivalent channel.
\newpage

As a consequence, the \ac{SINR} at {the $k$-th user} during the design of the scattering matrix is given as
\begin{equation}
\label{eq:SINR}
\hspace{-1ex}
\gamma_k = 
\frac{ \left| \mathbf{h}_k^T \boldsymbol{\Omega} \mathbf{v}_k\right|^2}
{\sum\limits_{i \ne k} \left| \mathbf{h}_k^T \boldsymbol{\Omega} \mathbf{v}_i \right|^2 + N_0}
=
\frac{\left| \mathbf{e}_{k} \mathbf{v}_k \right|^2}
{\sum\limits_{i \ne k} \left| \mathbf{e}_{k} \mathbf{v}_i \right|^2 + N_0}.
\end{equation}

Making use of \eqref{eq:SINR}, the optimization problem {of concern can be} formulated as
\begin{subequations} \label{prob:P2}
\begin{align}
  (\text{P2}): \quad \underset{\mathbf{\Theta}}{\mathrm{maximize}} \quad & \sum_k \log_2(1 + \gamma_k)  \\
  \text{subject to} \quad & \mathbf{\Theta} \in \mathcal{S}_{a_1},  \\
  & \mathbf{\Theta} \in \mathcal{S}_{a_2}.
\end{align}
\end{subequations}

\subsection{Solution for the Group-Connected Architecture}

As the group-connected \ac{BD-RIS} architecture is the most {general version} considered, it can provide a generic solution to (P2), which in turn can be rewritten as
\begin{subequations}  
\begin{align}
 (\text{P2a}): \quad \underset{\mathbf{\Theta}}{\mathrm{maximize}} \quad & \sum_k \log_2(1 + \gamma_k)  \\
\text{subject to} \quad & \mathbf{\Theta}_g = \mathbf{\Theta}_g^T,  \\
 & \mathbf{\Theta}_g\mathbf{\Theta}_g^H = \mathbf{I}_{R_G},
\end{align}
\end{subequations} 
where $g=\{1,2,\dots,G\}$.

For simplicity, the sum-rate over all users is denoted as
\begin{equation}
\eta = \sum_k \eta_k, \quad \text{with} \quad \eta_k = \log_2(1 + \gamma_k).
\end{equation}

As previously noted, the optimization problem (P2a) is inherently non-convex due to the \ac{SINR} being a ratio of quadratic terms of $\mathbf{\Theta}$.
The unitary equality constraint further complicates the problem due to its non-convex nature.

To address this matter, manifold optimization is employed by treating $\mathbf{\Theta}_g$ as a point on the Stiefel manifold.
Meanwhile, the symmetry constraint restricts $\mathbf{\Theta}_g$ to the manifold of symmetric complex matrices, which forms a linear subspace and thus also a smooth manifold.
To simultaneously enforce both the unitary and symmetry constraints, an intersection of the two sets is needed.
However, as this intersection is challenging to handle directly, a workaround approach is considered by modifying the objective function to add a penalty that promotes symmetry.

Accordingly, the optimization problem then becomes
\begin{subequations}  
\begin{align}
\hspace{-1.35ex}
 (\text{P3}): \quad \underset{\mathbf{\Theta}}{\mathrm{maximize}} \quad & \!\!\sum_k \log_2(1 + \gamma_k)
  - \nu \bigl\| \boldsymbol{\Theta} - \boldsymbol{\Theta}^T \bigr\|_F^2
\label{eq:P3_obj}\\
\text{subject to} \quad & \!\!\mathbf{\Theta}_g\mathbf{\Theta}_g^H = \mathbf{I}_{R_G},\label{eq:P3_const}
\end{align}
\end{subequations}
where $\nu \in \mathbb{R}$ denotes a nonnegative weight{\footnote{{The penalty parameter $\nu$ is selected empirically based on preliminary simulations. 
While systematic hyperparameter optimization techniques could be employed (see, e.g., \cite{Boukari1995,Pedregosa2016,Bischl2023
%,Ehrhardt2024
}), the focus of this work is on the scattering matrix design framework itself rather than on optimal regularization tuning. 
Moreover, in the following sections, the proposed method is shown to outperform existing \ac{SotA} approaches even with empirically chosen $\nu$, and further performance improvements may be achieved through adaptive or data-driven tuning strategies.}}} and $\bigl\| \boldsymbol{\Theta} - \boldsymbol{\Theta}^T \bigr\|_F^2
$ is used to obtain a quantitative measure of how far the matrix is from being symmetric.

Having defined the optimization problem (P3), its solution can be obtained numerically using the Manopt toolbox \cite{BoumalJMLR2014}.
However, {measures can be taken to reduce the complexity of the optimization problem.
In particular, to address the challenges arising in part from the non-convexity of the sum-rate function, the objective can be transformed into a convex form using \ac{FP} techniques \cite{
%shen2018phd, 
ShenTSP2018_1, ShenTSP2018_2}.
%%

%This algorithm incorporates a closed-form expression for the gradient of the objective function, as seen in \eqref{eq:P3_obj}, enabling a more efficient implementation of the scattering matrix design.

{To this extent, we apply the \ac{LDT} to the sum-rate objective in (P3), which leads to the equivalent formulation
\vspace{-1ex}
\begin{equation}
  \label{eq:ldt_sr}
  \bar{\eta}_k = \log_2(1+\tau_k) - \frac{\tau_k}{\ln(2)} + \frac{1+\tau_k}{\ln(2)}\cdot\frac{\left| \mathbf{e}_{k} \mathbf{v}_k \right|^2}
{\sum\limits_{i} \left| \mathbf{e}_{k} \mathbf{v}_i \right|^2 + N_0},
\vspace{-1ex}
\end{equation}
where the auxiliary variable $\tau_k \in \mathbb{C}$ denotes the Lagrange multiplier, s.t. $\tau_k = \gamma_k$. 

The fractional term in the majorized rate expression $\bar{\eta}_k$ given in \eqref{eq:ldt_sr} can be further convexified by applying the \ac{QT}, resulting in the following convex reformulation of the sum-rate objective:
\begin{align}
  \label{eq:eta_hat}
  \hat{\eta}_k = &  \log_2(1+\tau_k) - \frac{\tau_k}{\ln(2)} \\
  & + \frac{1+\tau_k}{\ln(2)} \Big[ 2\Re\{y_k^{\star}\mathbf{e}_k\mathbf{v}_k\}-\left|y_k\right|^2\big(\sum_{i}\left|\mathbf{e}_k\mathbf{v}_i\right|^2+N_0\big)\Big], \nonumber
\end{align}
where $y_k \in \mathbb{C}$ is an auxiliary variable introduced by the \ac{QT}, given as
\begin{equation}
  y_k = \frac{\mathbf{e}_k\mathbf{v}_k}{\sum\limits_{i}\left|\mathbf{e}_k\mathbf{v}_i\right|^2+N_0}.
\end{equation}

In this manner, the optimization problem (P3) can be reformulated as
\begin{subequations}  
\begin{align}
\hspace{-1.35ex}
 (\text{P3a}): \quad \underset{\mathbf{\Theta}}{\mathrm{maximize}} \quad & \!\!\sum_k \hat{\eta}_k
  - \nu \bigl\| \boldsymbol{\Theta} - \boldsymbol{\Theta}^T \bigr\|_F^2
\label{eq:P3a_obj}\\
\text{subject to} \quad & \!\!\mathbf{\Theta}_g\mathbf{\Theta}_g^H = \mathbf{I}_{R_G},\label{eq:P3a_const}
\end{align}
\end{subequations}
where, for convenience, the modified sum-rate objective function in \eqref{eq:P3a_obj}, including the penalty term, is denoted as $\breve{\eta}_k$.

}

% To this extent, the \rev{contributions of the group-connected case, as defined in \eqref{eq:group_scattering}, are considered.}
% %%
% \rev{Accordingly}, the group-wise \ac{SINR}\footnote{This notation considers only the contributions of $\mathbf{\Theta}_g$, with the contributions from all other groups set to zero.} equation can be written as
% %%
% \begin{equation}
% \gamma_k^{(g)} = 
% \frac{\left| \mathbf{h}_k^{(g)T} \mathbf{\Theta}_g \mathbf{W}^{(g)} \mathbf{v}_k \right|^2 }
% { \sum\limits_{i \ne k} \left| \mathbf{h}_k^{(g)T} \mathbf{\Theta}_g \mathbf{W}^{(g)} \mathbf{v}_i \right|^2 + N_0 },
% \end{equation}
% %%
% where $\mathbf{v}_l \in \mathbb{C}^{N \times 1}$ denotes the beamforming vector, with $l \in\{i,k\}$.

{In addition, to alleviate the high computational burden associated with Manopt, a tailored} \ac{CGA} algorithm \cite{ShewchukCMU1994} is derived in the sequel.
This algorithm employs a closed-form expression for the gradient of the objective function, as given in \eqref{eq:P3a_obj}, thereby enabling a more efficient implementation of the scattering matrix design.

To derive a solution for the general group-connected case, we define} the vector $\mathbf{h}_k^{(g)T} = \mathbf{H}_{RX\left[k, R_G(g-1)+1:gR_G\right]}$ and the matrix ${\mathbf{H}^{(g)}_{\mathrm{TX}}} = \mathbf{H}_{\mathrm{TX}\left[R_G(g-1)+1:gR_G, 1:N\right]}$, {which} represent the channel components associated with the $g$-th group.
Specifically, $\mathbf{h}_k^{(g)T} \in \mathbb{C}^{1 \times R_G}$ denotes the \ac{BD-RIS}-{to-user $k$} channel vector, and ${\mathbf{H}^{(g)}_{\mathrm{TX}}} \in \mathbb{C}^{R_G \times N}$ denotes the \ac{BS}-{to-}\ac{BD-RIS} channel matrix corresponding to the group size $R_G$, respectively.
Within the $g$-th group dimension, the scattering matrix $\mathbf{\Theta}_g \in \mathbb{C}^{R_G \times R_G}$ models the group-specific scattering behavior.
Accordingly, the group-wise equivalent channel is defined as $\mathbf{e}_k^{(g)} = \mathbf{h}_k^{(g)T} \mathbf{\Theta}_g {\mathbf{H}^{(g)}_{\mathrm{TX}}}$, where $\mathbf{e}_k^{(g)} \in \mathbb{C}^{1\times N}$. 

% enabling the \ac{SINR} to be rewritten as
% %%
% \begin{equation}
% \gamma_k^{(g)} = 
% \frac{\left| \mathbf{e}_{k}^{(g)}\mathbf{v}_k \right|^2 }
% {\sum\limits_{i \ne k} \left| \mathbf{e}_{k}^{(g)}\mathbf{v}_i \right|^2 + N_0 }.
% \label{eq:gamma_k_group}
% \end{equation}

% Using Manopt, a MATLAB library designed for manifold based optimization problems \cite{BoumalJMLR2014}, (P3) is solved and an "optimal" $\mathbf{\Theta}^\star$ is obtained. 
% %
% The manifolds in Manopt are represented as structures, and the toolbox provides different solvers, including steepest descent, trust-region, conjugate gradient methods, etc.
% %
% The steepest gradient algorithm used in this work is a generalization of its Euclidean counterpart.
% %
% It works by computing the Riemannian gradient (by projecting its Euclidean counterpart onto a tangent space of the manifold), performs a line search along geodesics (the manifold equivalent of straight lines), and retracts the result back onto the manifold.
% %
% The method is chosen due to its simplicity and suitability for large-scale problems like the one considered. 

{As such, the gradient of the objective function \eqref{eq:P3a_obj} with respect to each group $\mathbf{\Theta}_g$ is computed as
\begin{equation}
\label{eq:grad}
\nabla_{\mathbf{\Theta}_g}\!\breve{\eta}  =  \nabla_{\mathbf{\Theta}_g} \! \Big( \!  \sum_k \hat{\eta}_k - \nu \left\| \mathbf{\Theta}_g - \mathbf{\Theta}_g^T \right\|_F^2 \! \Big),
\vspace{-1ex}
\end{equation}}
where the symmetry penalty term can be rewritten as
\begin{align}
\left\| \mathbf{\Theta}_g - \mathbf{\Theta}_g^T \right\|_F^2 & = \mathrm{Tr}\{(\mathbf{\Theta}_g-\mathbf{\Theta}_g^T)^H(\mathbf{\Theta}_g-\mathbf{\Theta}_g^T)\} \label{eq:trace_term} \\
& =\mathrm{Tr}\{(\mathbf{\Theta}_g^H-\mathbf{\Theta}_g^{\star})(\mathbf{\Theta}_g-\mathbf{\Theta}_g^T)\} \nonumber  \\ 
& =\mathrm{Tr}\{\mathbf{\Theta}_g^H\mathbf{\Theta}_g-\mathbf{\Theta}_g^H\mathbf{\Theta}_g^T-\mathbf{\Theta}_g^{\star}\mathbf{\Theta}_g+\mathbf{\Theta}_g^{\star}\mathbf{\Theta}_g^T\}.
\nonumber
\end{align}

The gradient for half of the terms in \eqref{eq:trace_term} can be effectively found in {\cite[eq.~(248)]{Petersen2008}}, namely
\begin{equation}
\nabla_{\mathbf{\Theta}_g}\mathrm{Tr}\{\mathbf{\Theta}_g^H\mathbf{\Theta}_g\} = \nabla_{\mathbf{\Theta}_g}\mathrm{Tr}\{\mathbf{\Theta}_g^{\star}\mathbf{\Theta}_g^T\}^T = 2\mathbf{\Theta}_g.
\end{equation}

Conversely,
\begin{equation}
 \nabla_{\mathbf{\Theta}_g}\mathrm{Tr}\{\mathbf{\Theta}_g^H\mathbf{\Theta}_g^T\}=\nabla_{\mathbf{\Theta}_g}\mathrm{Tr}\{\mathbf{\Theta}_g^{\star}\mathbf{\Theta}_g\}^T,
\end{equation}
where the gradient is obtained as
\begin{equation}
\nabla_{\mathbf{\Theta}_g}\mathrm{Tr}\{\mathbf{\Theta}_g^{\star}\mathbf{\Theta}_g\} = 2\frac{\partial\mathrm{Tr}\{\mathbf{\Theta}_g^{\star}\mathbf{\Theta}_g\}}{\partial\mathbf{\Theta}_g^{\star}} = 2\mathbf{\Theta}_g^T,
\label{eq:Transpose}
\end{equation}
using the result from Lemma~\ref{lem:gradAconjA}.

The reader is referred to Appendix \ref{app:lemma1} for further details.

Based on the above, the gradient of $\left\| \mathbf{\Theta}_g - \mathbf{\Theta}_g^T \right\|_F^2$ with respect to $\mathbf{\Theta}_g$ is derived as
\begin{equation}
\nabla_{\mathbf{\Theta}_g}\left\| \mathbf{\Theta}_g - \mathbf{\Theta}_g^T \right\|_F^2 = 4\left(\mathbf{\Theta}_g-\mathbf{\Theta}_g^T\right).
\end{equation}

{Omitting the constant terms of the equivalent sum-rate objective after the \ac{FP} transformations allows for the gradient to be expressed as
\vspace{-1ex}
\begin{equation}
\label{eq:gradD}
\nabla_{\mathbf{\Theta}_g} \hat{\eta}_k \!=\!  \nabla_{\mathbf{\Theta}_g} \!\Big( 2\Re\{y_k^{\star}\mathbf{e}_k\mathbf{v}_k\}\!-\!\left|y_k\right|^2\!\big(\sum_{i}\left|\mathbf{e}_k\mathbf{v}_i\right|^2\!+\!N_0\big) \Big),
\vspace{-1ex}
\end{equation}
where the gradient with respect to both terms is given as 
\begin{equation}
  \label{eq:term1_grad}
  \nabla_{\mathbf{\Theta}_g} 2\Re\{y_k^{\star}\mathbf{e}_k\mathbf{v}_k\} = 2\left(y_k^{\star}\frac{\partial \mathbf{e}_{k}\mathbf{v}_k}{\partial\mathbf{\Theta}_g}\right)^{\star},
\end{equation}
and
\begin{equation}
  \label{eq:term2_grad}
  \nabla_{\mathbf{\Theta}_g} \left|y_k\right|^2\big(\sum_{i}\left|\mathbf{e}_k\mathbf{v}_i\right|^2+N_0\big) = 2 \left|y_k\right|^2\sum_{i}\left(\big(\mathbf{e}_k\mathbf{v}_i\big)^{\star}\frac{\partial \mathbf{e}_{k}\mathbf{v}_i}{\partial\mathbf{\Theta}_g}\right)^{\star},
\end{equation}
\color{black}
using the result obtained in Lemma~\ref{lem:lem2}. 
The reader is referred to Appendix \ref{app:lemma2} for further details.}

Finally, the partial derivatives in \eqref{eq:term1_grad} and \eqref{eq:term2_grad} are {given by}
\begin{align}
\hspace{-1.25ex}
\frac{\partial \mathbf{e}_{k}\mathbf{v}_l}{\partial\mathbf{\Theta}_g} &= \frac{\partial}{\partial\mathbf{\Theta}_g}\! \left(\mathbf{h}_k^{(g)T} \mathbf{\Theta}_g \mathbf{W}^{(g)}\mathbf{v}_l\right)\! = \mathbf{h}_k^{(g)}(\mathbf{W}^{(g)}\mathbf{v}_{l})^T\!,
\end{align}
{where $l \in\{i,k\}$.}

Accordingly, the gradient of the objective function in \eqref{eq:grad} is reformulated as shown in \eqref{eq:fullgradbf}.
For convenience, \eqref{eq:fullgrad} presents the gradient of the objective function assuming a simple power allocation matrix used for beamforming\footnote{This assumption can only be made for fully-loaded systems ($K$ = $N$).}, $i.e.$, $\mathbf{V} = \mathrm{diag}(\sqrt{v}_1, \dots, \sqrt{v}_K)$, where $v_k \in \mathbb{R}, \forall k$.
{For convenience,} both gradients are provided at the top of {the following} page.

\begin{figure*}[t]
\begin{align}
  {
\nabla_{\mathbf{\Theta}_g}\breve{\eta} =  \sum_k \frac{1+\tau_k}{\ln(2)}\Big[2\left(y_k^{\star}\mathbf{h}_k^{(g)}\big(\mathbf{W}^{(g)}\mathbf{v}_k \big)^T\right)^{\star} - 2 \left|y_k\right|^2\sum_{i}\left(\big(\mathbf{e}_k\mathbf{v}_i\big)^{\star}\mathbf{h}_k^{(g)}\big(\mathbf{W}^{(g)}\mathbf{v}_i\big)^{T}\right)^{\star}\Big] \!- \!
    4\left(\mathbf{\Theta}_g-\mathbf{\Theta}_g^T\right). }
\label{eq:fullgradbf}
\end{align}
\begin{center}
\rule{0.5\textwidth}{0.1pt}
\begin{align}
  {
\nabla_{\mathbf{\Theta}_g}\breve{\eta} = \sum_k \frac{1+\tau_k}{\ln(2)}\Big[2v_k\big(y_k^{\star}\mathbf{h}_k^{(g)}\mathbf{w}_k^{(g)T}\big)^{\star} -2\left|y_k\right|^2\sum_{i}v_i^2\big(e_{k,i}^{\star}\mathbf{h}_k^{(g)}\mathbf{w}_i^{(g)T}\big)^{\star}\Big] - 4\nu(\mathbf{\Theta}_g \!- \mathbf{\Theta}_g^T). }
\label{eq:fullgrad}
\end{align}
\rule{1\textwidth}{0.1pt}
\end{center}
\vspace{-2ex}
\end{figure*}

% Furthermore, since adding a penalty is only a workaround approach, the $\mathbf{\Theta}^{\star}$ obtained does not always belong to the set of feasible solutions.
% %
% Therefore, we solve this problem using the projection method described in the following equation~\eqref{eq:unisymproj}, which projects the solution of $\mathbf{\Theta}^\star$ to the feasible set, $i.e.$, a unitary symmetric matrix.
% %
% Resultantly, an optimal solution $\mathbf{\Theta}_{opt}$ is obtained, which is then used in the following section as an initialization to optimize the beamformer.

The constraint of the optimization problem, given in \eqref{eq:P3a_const}, defines a complex Stiefel manifold of dimension $R_g^2$.
For clarity, a manifold is a topological space that locally resembles the Euclidean space, but may have a globally curved or more complex structure. 
Such manifolds naturally enforce constraints and symmetries, including orthogonality or unitary conditions \cite{JuniorJCS2024}.

The complex Stiefel manifold {corresponding to square blocks (\( p = n = R_G \)) is given by}
\vspace{-0.5ex}
\begin{equation}
\label{eq:stiefel}
\mathrm{St}(R_G,R_G;\mathbb{C})=\{\mathbf{\Theta}_g\in\mathbb{C}^{R_G\times R_G}:
\mathbf{\Theta}_g\mathbf{\Theta}_g^H=\mathbf{I}_{R_G}\}.
\vspace{-0.5ex}
\end{equation}

Since the general group-connected \ac{BD-RIS} scattering matrix is block-diagonal, it can be expressed as
\vspace{-0.5ex}
\begin{equation}
\hspace{-0.44ex} \boldsymbol{\Theta} = \mathrm{blkdiag}(\boldsymbol{\Theta}_1, \boldsymbol{\Theta}_2, \dots, \boldsymbol{\Theta}_G), \: \boldsymbol{\Theta}_g \in \mathrm{St}(R_G, R_G; \mathbb{C}),
\vspace{-0.5ex}
\end{equation}
where $g \in \{1,2,\dots,G\}$.

The usage of the \ac{CGA} algorithm is motivated by the rich geometric structure of Riemannian manifolds, which allows well-defined gradients of cost functions.
Additionally, optimization over a manifold is locally similar to that in the Euclidean space.
As a result, optimization techniques developed for the Euclidean spaces, such as the gradient descent and trust-region methods, have corresponding formulations on manifolds \cite{YuICCC2019}.

The algorithm proceeds through a sequence of retraction operations $\mathrm{R}(\cdot,\cdot)$, followed by a projection $\mathrm{T}(\cdot,\cdot)$ onto the tangent space.
The retraction operation is expressed as
\vspace{-0.5ex}
\begin{equation}
\mathrm{R}_\mathbf{\Theta}(\mathbf{\Theta}, \boldsymbol{\Xi}) = \mathcal{Q}(\mathbf{\Theta} + \alpha \boldsymbol{\Xi}),
\vspace{-0.5ex}
\end{equation}
where $\alpha$ denotes the step size, $\mathbf{\Theta}$ the current point on the manifold, and $\boldsymbol{\Xi}$ the ascent direction in the tangent space.

In the above, the operator $\mathcal{Q}(\cdot)$ returns the Q-factor from the QR decomposition of  $\mathbf{\Theta} + \alpha \boldsymbol{\Xi}$, ensuring that $\mathrm{R}_\mathbf{\Theta}(\mathbf{\Theta} , \boldsymbol{\Xi}) \in \mathrm{St}(R, R; \mathbb{C})$ \cite{HiroyukiCDC2014}.
Thus, the retraction function maps a point in the tangent space back onto the manifold.

The retraction function for each block is defined as
\vspace{-0.5ex}
\begin{equation}
\mathbf{Y}_g = \mathbf{\Theta}_g + \alpha \boldsymbol{\Xi}_g, \: [\mathcal{Q}_g, \mathcal{R}_g] = \mathcal{QR}(\mathbf{Y}_g), \: \mathbf{\Theta}_g^{\text{new}} = \mathcal{Q}_g.
\vspace{-0.5ex}
\end{equation}

Complementary to the retraction operation, the tangent projection function describes the feasible directions of movement at a point on a manifold, represented by tangent vectors.
All tangent vectors at a point form the tangent space, which includes every possible direction the point can move to.

Furthermore, the tangent space can be regarded as a Euclidean space, where a particular tangent vector, denoted as the Riemannian gradient, indicates the direction of steepest descent of the objection function \cite{AbsilPSP2008}.

In light of the above, it is initially required for the Euclidean gradient of the objective function to be computed.
The Riemannian gradient is then obtained by projecting this Euclidean gradient, as given in \eqref{eq:fullgradbf} and \eqref{eq:fullgrad}, onto the tangent space. 
This projection is expressed as
\vspace{-0.5ex}
\begin{equation}
\label{eq:tangentproj_full}
\mathrm{T}_{\mathbf{\Theta}}( \nabla_{\mathbf{\Theta}} \eta , \mathbf{\Theta}) = \nabla_{\mathbf{\Theta}} \eta - \mathbf{\Theta} \cdot \frac{\mathbf{\Theta}^H\nabla_{\mathbf{\Theta}} \eta + (\nabla_{\mathbf{\Theta}} \eta )^H \mathbf{\Theta}}{2}.
\vspace{-0.5ex}
\end{equation}

Straightforwardly, for each group $g$, the projection onto the tangent space is given as
\vspace{-0.5ex}
\begin{equation}
\label{eq:tangentproj}
\mathrm{T}_{\mathbf{\Theta}_g}( \nabla_{\!\mathbf{\Theta}_g} \eta , \mathbf{\Theta}_g) \!=\! \nabla_{\!\mathbf{\Theta}_g} \eta  - \mathbf{\Theta}_g \cdot \frac{\mathbf{\Theta}_g^H\nabla_{\mathbf{\Theta}_g} \eta + (\nabla_{\mathbf{\Theta}_g} \eta )^H \mathbf{\Theta}_g}{2}. 
\vspace{-0.5ex}
\end{equation}

After the optimization, $i.e.$, following the sequence of retraction and tangent projection operations, the method from \cite{FangCL2024} is applied to enforce the symmetry and unitary constraints, since adding a penalty serves only as a workaround approach.
The full procedure implemented in this article is outlined in Algorithm~\ref{alg:cga}.

\begin{algorithm}[H]
\caption{Proposed CGA for {R}BD-RIS Optimization}
\label{alg:cga}
\begin{algorithmic}[1]
\REQUIRE $\mathbf{H}_{\mathrm{TX}}, \mathbf{H}_{\mathrm{RX}}, P_{\mathrm{max}}, N, N_0, G, \nu, I, \epsilon$
\ENSURE Optimized RIS matrix $\mathbf{\Theta}^{\mathrm{opt}}$
\INITIALIZE $\mathbf{\Theta}^{(0)}$ as a random block-diagonal unitary symmetric matrix
\STATE {Compute initial FP auxiliary variables $\tau_k$ and $y_k$, $\forall k$}
\STATE Compute initial objective ${\breve{\eta}}(\mathbf{\Theta}^{(0)})$ and Riemannian gradient $\mathbf{r}^{(0)} = \mathrm{T}_{\mathbf{\Theta}}(\nabla_{\mathbf{\Theta}}{\breve{\eta}}(\mathbf{\Theta}^{(0)}), \mathbf{\Theta}^{(0)})$ {using \eqref{eq:tangentproj_full}}
\vspace{0.125em}
\STATE Set initial search direction $\boldsymbol{\Xi}^{(0)} = -\mathbf{r}^{(0)}$
\FOR{$i = 0$ to $I$}
\IF{$\langle \mathbf{r}^{(i)}, \boldsymbol{\Xi}^{(i)} \rangle \leq 0$}
\vspace{0.125em}
    \STATE Set $\boldsymbol{\Xi}^{(i)} = \mathbf{r}^{(i)}$
\ENDIF
\STATE Compute $\alpha^{(i)}$ via Armijo line search and update $\mathbf{\Theta}^{(i+1)}$
\vspace{-2.5ex}
\STATE {Compute new FP auxiliary variables $\tau_k$ and $y_k$, $\forall k$}
\STATE Compute new objective ${\breve{\eta}}(\mathbf{\Theta}^{(i+1)})$, {using \eqref{eq:P3a_obj}}
\STATE Compute new Riemannian gradient $\mathbf{r}^{(i+1)}$, using \eqref{eq:tangentproj}
\vspace{0.125em}
\STATE Compute $\beta^{(i)} = \max\left(0, \frac{\langle \mathbf{r}^{(i+1)}, \mathbf{r}^{(i+1)} - \mathbf{r}^{(i)} \rangle}{\langle \mathbf{r}^{(i)}, \boldsymbol{\Xi}^{(i)} \rangle}\right)$
\vspace{0.125em}
\STATE Update direction: $\boldsymbol{\Xi}^{(i+1)} = -\mathbf{r}^{(i+1)} + \beta^{(i)} \boldsymbol{\Xi}^{(i)}$
\IF{$|\eta(\mathbf{\Theta}^{(i+1)}) - \eta(\mathbf{\Theta}^{(i)})| < \epsilon$}
    \STATE \textbf{break}
\ENDIF
\ENDFOR
\FOR{$g = 1$ to $G$}
\STATE Extract block $\mathbf{\Theta}_g$ from $\mathbf{\Theta}^{(i+1)}$
\STATE Symmetrize: $\mathbf{\Theta}_{\mathrm{sym}} = \frac{1}{2}(\mathbf{\Theta}_g + \mathbf{\Theta}_g^{T})$
\STATE Perform SVD: $\mathbf{\Theta}_{\mathrm{sym}} = \mathbf{U}\mathbf{\Sigma}\mathbf{V}^H$
\STATE Set $\mathbf{\Theta}^{\mathrm{opt}}_{g} = \mathbf{U}\mathbf{V}^H$
\ENDFOR
\STATE Assemble full $\mathbf{\Theta}^{\mathrm{opt}}$ as block diagonal of all $\mathbf{\Theta}_{g}^{\mathrm{opt}}$
\RETURN $\mathbf{\Theta}_{\mathrm{opt}}$
\end{algorithmic}
\end{algorithm}
\vspace{-2ex}

%{\color{red}
%For the reader's convenience, the projection method is summarized as follows.
%%%
%First, a symmetric matrix projection ensures that $\mathbf{A}=\mathbf{A}^T$, by averaging the solution with its transpose, $i.e.$,
%%%
%\vspace{-0.5ex}
%\begin{equation}
%\label{eq:sym}
%\Pi_{S_{sym}}(\mathbf{A})=\frac{1}{2}(\mathbf{A}+\mathbf{A}^T).
%\vspace{-0.5ex}
%\end{equation}
%
%
%
%Second, to enforce the unitary constraint, the matrix is projected onto the set of unitary matrices using the \ac{SVD}. 
%%%
%Specifically, if $\mathbf{A}=\mathbf{U}\boldsymbol{\Sigma} \mathbf{V}^H$, the unitary projection is defined as
%%
%\vspace{-0.5ex}
%\begin{equation}
%\label{eq:uniproj}
%\Pi_{S_{uni}}(\mathbf{A}) = \mathbf{U}\mathbf{V}^H.
%\vspace{-0.5ex}
%\end{equation}
%
%Lastly, when both symmetry and unitary constraints must be enforced simultaneously, the procedure is as follows.
%%%
%If $\mathbf{B}$ is defined as the result after the symmetric projection of $\mathbf{A}$, $i.e.$, $\mathbf{B}=\Pi_{S_{sym}}(\mathbf{A})$, then given that the \ac{SVD} of $\mathbf{B}=\mathbf{U}\boldsymbol{\Sigma} \mathbf{V}^H$, the unitary symmetric projection of $\mathbf{A}$ is defined as
%%%
%\begin{equation}
%\label{eq:unisymproj}
%\Pi_{S_{unisym}}(\mathbf{A}) = \mathbf{U}\mathbf{V}^H,
%\end{equation}
%%%
%where $\mathbf{U}$ and $\mathbf{V}^H$ denote the left and right singular vector matrices, respectively.}

%
The algorithm begins by initializing a random unitary symmetric block-diagonal matrix $\mathbf{\Theta}^{(0)}$ and computing the initial objective function along with its gradient. 
In each iteration, these quantities are updated according to the current values of the scattering matrix until convergence is achieved.
The step size $\alpha$ is determined using an Armijo line search \cite{AbsilPSP2008, Nocedal2009}, which ensures that there is a sufficient ascent.
Furthermore, the parameter $\beta$ denotes the \ac{CG} momentum term, used to enhance the convergence speed of the algorithm. 

Among various options, including the Fletcher-Reeves, Polak Ribi\`ere, and Hestenes-Stiefel formula, this algorithm incorporates the Polak-Ribi\`ere formula as described in \cite{ShewchukCMU1994, Nocedal2009}.
Finally, the scattering matrix resulting from the series of retraction and tangential projection operations is further processed using the method {described in \cite{FangCL2024}}, ensuring that $\mathbf{\Theta}_g^{\mathrm{opt}}$ satisfies both the symmetric and unitary constraints.

% \footnote{It is important to note that the convergence of the algorithm also depends on the initial scattering matrix, which in this article is a random unitary symmetric matrix.}
%
% The large number of iterations required for convergence as witnessed in Figure~\ref{fig:converge} is due to $P_{\mathrm{max}}$ being set to $30dBm$.
%
% The algorithm requires a larger number of iterations to converge as $P_{\mathrm{max}}$ increases.

\textit{Computational Complexity:} The computational complexity of Algorithm~\ref{alg:cga} is primarily influenced by the Armijo line search. 
Within the line search, two operations dominate the complexity, namely a triple matrix multiplication and a QR decomposition-based retraction.
Considering the general case of the group-connected architecture, the QR decomposition for each group, of size $R/G$, has a complexity of $\mathcal{O}\big((R/G)^3\big)$, while the triple matrix multiplication has a complexity of $\mathcal{O}\big(KR(R+N)\big)$.
%%
% However, unlike the QR decomposition which is done for each group, the triple matrix multiplication is not.
Since the QR decomposition is applied independently to each group, the total complexity of the retraction operation across all groups and users is given as $\mathcal{O}\big(KG(R^3/G^3)\big)$.

\begin{table*}
\centering
\small
\renewcommand{\arraystretch}{1.2}
\caption{Computational Complexity of BD-RIS Scattering Matrix Design}
\begin{tabular}{|>{\centering\arraybackslash}p{0.15\textwidth}|>{\centering\arraybackslash}p{0.375\textwidth}|>{\centering\arraybackslash}p{0.375\textwidth}|}
\hline
\multirow{2}{*}{\centering BD-RIS Architecture} & \multicolumn{2}{c|}{Computational Complexity} \\ \cline{2-3}
& Interference Nulling SotA Alg. \cite{YahyaOJCS2024} & Proposed Alg. \ref{alg:cga} \\ \hline
Single-connected & $\mathcal{O}(tR + K^6 + K^4R + K^2R^2 + K^2R)$& $\mathcal{O}(IL(KR + K^2R^2 + K^2RN))$ \\ \hline
Group-connected & $\mathcal{O}(tG\!R_G^3 \!+ \!tG\!R_G^2\! +\! K^6\!\! +\! K^4G\!R_G^2\! +\! K^2\!G^2\!R_G^4\! +\! K^2\!G\!R_G^2)$ & $\mathcal{O}(IL(KGR_G^3 + K^2R^2 + K^2RN))$ \\ \hline
Fully-connected & $\mathcal{O}(tR^3+tR^2+K^6 +K^4R^2 + K^2R^4 +K^2R^2)$& $\mathcal{O}(IL(KR^3 + K^2R^2 + K^2RN))$ \\ \hline
\end{tabular}
\label{tab:complexity}
\vspace{-3ex}
\end{table*}

In contrast, the triple matrix multiplication is performed on the full system, $i.e.$, not on a per-group basis, and yields a total complexity for all users of $\mathcal{O}\big(K(KR^2+KRN)\big)$.
% \footnote{The triple matrix multiplication is not done on a per-group basis.}
With $L$ Armijo iterations, these contributions accumulate, giving an overall line search complexity of 
\vspace{-0.5ex}
\begin{equation}
\mathcal{O}\big(LKG(R^3/G^3) + LK(KR^2+KRN)\big).
\vspace{-0.5ex}
\end{equation}

The gradient computation consists of several terms, resulting in a total complexity across all groups and users of {$\mathcal{O}\big(K^2 G (R/G)^2\big)$, and the computation of the \ac{FP} auxiliary variables, which further amounts to $\mathcal{O}\big(K^2N\big)$.}
% such as the interference term which yields a complexity of $\mathcal{O}(K^2)$, the vector-matrix product with a complexty of $\mathcal{O}(KN/G)$, the vector outer-product with a complexity of $\mathcal{O}((N/G)^2)$, the outer product and matrix operations in the second part of the numerator summed over $K-1$ terms with a complexity of $\mathcal{O}(K(N/G)^2)$, and finally the transpose of a $N_G\times N_G$ matrix in the symmetry enforcing penalty with a complexity of $\mathcal{O}((N/G)^2)$.
%
% Therefore, the total complexity of the gradient for all groups and users is given by $\mathcal{O}(KG(K^2+KN/G+(N/G)^2 +K(N/G)^2)+ G(N/G)^2)$.
%
Furthermore, the initialization and projection {described in \cite{FangCL2024} have computational complexity} $\mathcal{O}\big(R^3/G^2\big)$, which stems from the \ac{SVD} computation.
Lastly, the tangential projection in \eqref{eq:tangentproj} has a complexity of $\mathcal{O}\big(R^3\big)$.
Considering $I$ iterations of the algorithm, and the case $L \gg 1$, the total complexity of the algorithm can then be expressed as
\vspace{-0.5ex}
\begin{equation}
\label{eq:complexity_simple}
\mathcal{O}\big(IL(KR^3/G^2 + K^2R^2 + K^2RN)\big).
\vspace{-0.5ex}
\end{equation}

Table \ref{tab:complexity} summarizes the simplified computational complexity associated with the different \ac{BD-RIS} architectures, {enabling a direct} comparison with {the \ac{SotA} method from \cite{YahyaOJCS2024}}\footnote{Notice that the {aforementioned \ac{SotA} technique} assumes that the number of \ac{TX} antennas at the \ac{BS} equals the number of single-antenna users ($K=N$). {The proposed method} does not require such assumption and therefore can be considered a generalization of the latter in this regard.}, where $t$ denotes the number of iterations required by {the interference nulling algorithm \cite{YahyaOJCS2024}}.
{In addition, we compare the proposed method against the joint scattering and beamforming matrix design approach based on the \ac{pp-ADMM} framework \cite{ZheyuArxiv2024}. 
Accounting for $T_{\mathrm{ADMM}}$ iterations required to satisfy the convergence criterion, the overall computational complexity of the \ac{pp-ADMM} algorithm scales as
\vspace{-0.5ex}
\begin{equation}
\mathcal{O}\Big(T_{\mathrm{ADMM}}\Big[R^{3}K + \big(\min\!\big\{\sum\nolimits_{i=1}^{R}|S_i|,\,2RK\big\}\big)^{3} \Big]
\Big),
\label{eq:complexity_admm}
\vspace{-0.5ex}
\end{equation}
where $S_i$ denotes the set of non-zero entries ($i.e.$, active connections) in the $i$-th row of the BD-RIS scattering matrix, such that $\sum_{i=1}^{R}|S_i|$ represents the total number of independent optimization variables determined by the connectivity structure of the BD-RIS.}\newpage

\begin{figure}[H]
\centering
% \begin{tcolorbox}[colback=blue!20, boxrule=0pt, arc=2mm, left=2mm, right=2mm, top=2mm, bottom=2mm]
\includegraphics[width=1\linewidth]{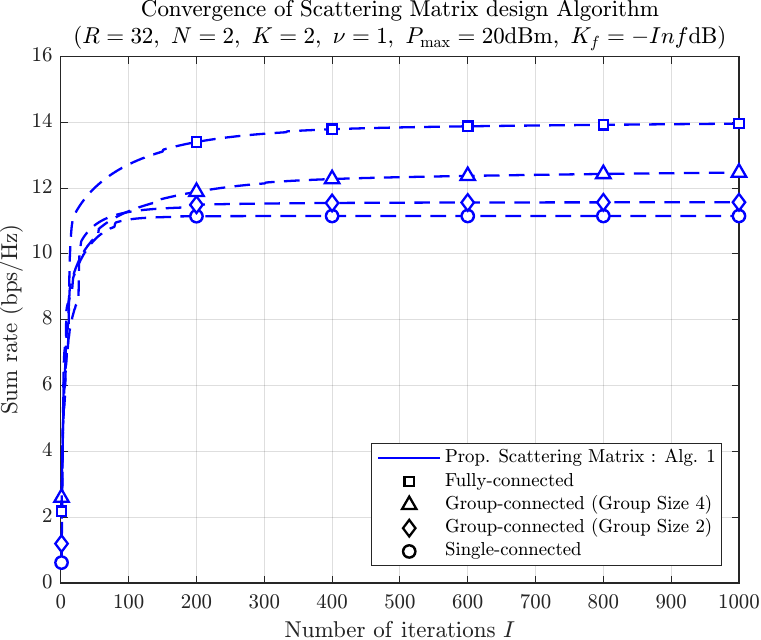}
\vspace{-3ex}
\caption{Convergence of Algorithm~\ref{alg:cga} vs. number of iterations $I$ for different connectivity structures, with $P_{\mathrm{max}} = 20\,\mathrm{dBm}$, {$K = 2$, $N = 2$}, $R = 32$, {and $K$ factor $K_f = -\infty$\,dB}.}
\label{fig:converge}
\vspace{-2ex}
% \end{tcolorbox}
\end{figure}

The convergence behavior of Algorithm~\ref{alg:cga} is shown in Figure~\ref{fig:converge} for the different \ac{BD-RIS} architectures considered, $i.e.$, single-, group-, with group size of 2 and 4, and fully-connected, respectively.
{For clarity,} the figure represents results generated for one Monte-Carlo iteration, hence the reason for the non-smooth convergence curves.

Each \ac{BD-RIS} architecture is represented by a distinct marker, as indicated in the legend of Figure \ref{fig:converge}, and this notation is used consistently throughout the article to facilitate cross-referencing of results.

% \begin{figure}[t!]
% \centering
%   \includegraphics[width=1\columnwidth]{Fig_old/final_complexity_new.pdf}
%   % \vspace{-3ex}
%   \caption{Computational complexity vs. number of \acp{RE}, with the proposed and SotA Interference Nulling \cite{YahyaOJCS2024} algorithm, under different connectivity structures.}
%   \label{fig:complex}
% \vspace{-3ex}
% \end{figure}
%
The simulation settings are as follows: convergence tolerance of $\epsilon = 10^{-8}$, maximum number of \ac{CGA} iterations $I = 8000$, maximum Armijo line search steps $L = 200$, sufficient increase coefficient for the stepsize of $2\times 10^{-11}$, initial stepsize $c_{\text{init}} = 1$, contraction factor for the stepsize $c_{\text{dec}} = 0.75$, and the symmetry enforcing penalty constant $\nu = 1$. 
Results indicate that more complex architectures require a larger number of iterations to converge;
for instance, the single-connected case converges in approximately {$100$} iterations, while the full-connected case requires around {$600$} iterations under the same conditions.

Additionally, the convergence speed of the algorithm depends on the number of \acp{RE} and the maximum transmit power $P_{\mathrm{max}}$, as well as on the initial scattering matrix, which in this article is chosen as a random unitary symmetric matrix.
{Regarding solution quality, since the reformulated optimization problem in \eqref{eq:P3a_obj} is convex, the \ac{FP}-based approach applied to the originally non-convex problem is guaranteed to converge to a stationary point, as the adopted approximation yields a monotonically non-decreasing objective sequence \cite{ShenTSP2018_1,ShenTSP2018_2
%,shen2018phd
}.
However, due to the non-convexity of the original problem, the stationary point generally corresponds to a local optimum.}
%%

% Finally, Figure \ref{fig:complex} presents a comparison of the computational complexity performance of the proposed algorithm against the \ac{SotA} method, described in \cite{YahyaOJCS2024}.

As validated in Figure \ref{fig:converge}, the number $I$ of iterations required for convergence is considered for the scenario where $P_{\mathrm{max}} = 20\mathrm{dBm}$, {$K = 2$ users, $N = 2$ \ac{TX} antennas, and $R = 32$ \acp{RE}, under Rayleigh fading conditions, $i.e.$, with a $K$-factor $(K_f)$ corresponding to $-\infty$\,dB}.
Specifically, these are {$I_{\mathrm{SC}} = 100$, $I_{\mathrm{GC}_2} = 200$, $I_{\mathrm{GC}_4} = 400$ and $I_{\mathrm{FC}} = 600$,} corresponding to the single-, group-, with group size 2 and 4, and fully-connected architectures, respectively.

As can be learned from Table \ref{tab:complexity} {and \eqref{eq:complexity_admm}}, our proposed algorithm offers scalability, since its computational complexity grows with lower exponents on key system parameters such as the number of users $K$, the number of \ac{RIS} elements $R$ and the number of \ac{BS} antennas $N$.
This advantage is in addition to the superior communications performance, which shall be thoroughly demonstrated in the next section.

For systems of smaller sizes, however, the performance advantage of the proposed method may come at the cost of a higher complexity compared to the \ac{SotA} techniques \cite{YahyaOJCS2024,ZheyuArxiv2024}, due to the large number of iterations required for convergence of the algorithm, as shown in Figure \ref{fig:converge}.
The improved sum-rate and increased robustness observed across multiple \ac{BD-RIS} architectures, resulting from the algorithm's design, are thoroughly presented and analyzed in the subsequent section.

\section{Simulation Results}
\label{sec:simresults}

This section aims to showcase the effectiveness of the proposed \ac{BD-RIS}-aided system in enhancing communication performance via computer simulations executed with the following parameters. 
{Unless otherwise specified, it} will be assumed throughout the {simulations} that the \ac{BS} is equipped with {$N = 2$} \ac{TX} antennas {and serves} {$K = 2$ single \ac{RX} antenna users\footnote{{This choice is motivated by the goal of isolating and evaluating the performance gains introduced by the \ac{BD-RIS}. 
In more heavily loaded scenarios, these gains may be partially masked by additional beamforming gains.}}}, the \ac{RIS} structures consist of $R = 32$ \acp{RE} and the maximum transmit power $P_\mathrm{max}$ ranges from $0$ to $20\,\mathrm{dBm}$.
Furthermore, the parameters utilized in Algorithm \ref{alg:cga} are the same as those previously used to generate the results shown in Figure \ref{fig:converge}.
{In addition to evaluating the proposed scattering matrix design, the numerical results present the achieved sum-rate when the optimized \ac{BD-RIS} is combined with a fixed \ac{MMSE}-based beamformer. 
Joint optimization of the scattering matrix and beamforming is not considered, allowing the impact of the proposed scattering matrix design to be assessed independently.}

\begin{figure}[H]
\centering
% \begin{tcolorbox}[colback=blue!20, boxrule=0pt, arc=2mm, left=2mm, right=2mm, top=2mm, bottom=2mm]
\subfloat[Single-connected\label{fig:manopt_SC}]{
\includegraphics[width=.98\columnwidth]{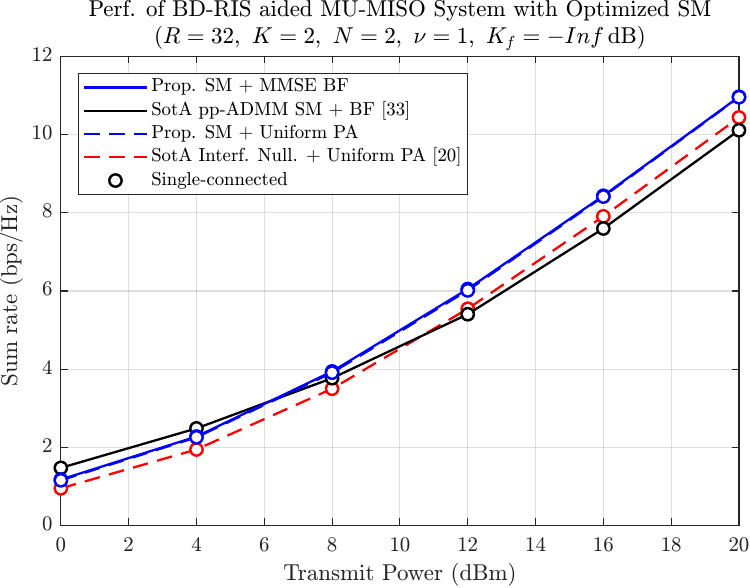}
}  \\[1.5ex]
\subfloat[Group-connected\label{fig:manopt_GC}]{
\includegraphics[width=.98\columnwidth]{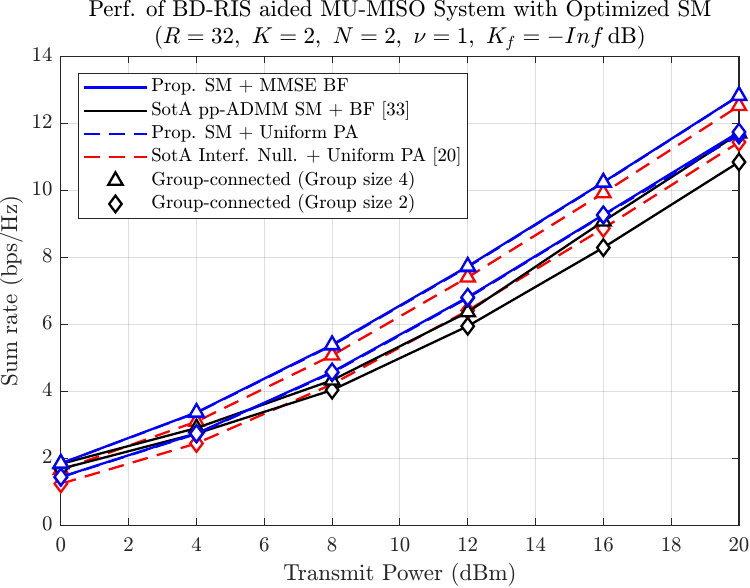}
}  \\[1.5ex]
\subfloat[Fully-connected\label{fig:manopt_FC}]{
\includegraphics[width=.98\columnwidth]{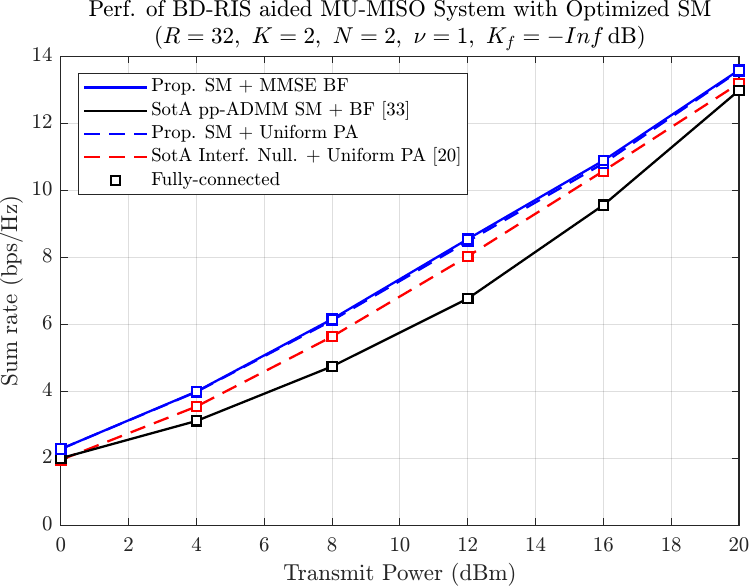}
}
% \vspace{-0.5ex}
\caption{Comparison of sum-rate performance of the proposed vs. \ac{SotA} \cite{YahyaOJCS2024,ZheyuArxiv2024} scattering matrix design with uniform power allocation, considering the single-connected ``SC", group-connected ``GC", with group sizes of 2 and 4, ``GC(2)" and ``GC(4)", and the fully-connected ``FC" architecture.}
\label{fig:manopt}
% \end{tcolorbox}
\end{figure}

{T}he channel model adopted in this article {considers both} i.i.d. Rayleigh {and Rician small-scale} fading with distance-dependent large-scale {attenuation} expressed as $\Upsilon(d) = C_0(\frac{d}{d_0})^{-\rho}$, where $C_0 = -30\,\mathrm{dB}$ {denotes} the reference channel gain at a reference distance $d_0 = 1$\,m, $\rho = 2.2$ is the path-loss exponent, $d = 50$\,m is the distance between the \ac{BS} and \ac{BD-RIS}, {and $d=2.5$\,m is the distance between the \ac{BD-RIS} and the users\footnote{Although i.i.d fading channels are {not entirely consistent} with \ac{2D} surfaces, as shown in \cite{BjornsonWCL2021}, {this model} is adopted here {only to enable} a fair comparison with the \ac{SotA} methods \cite{YahyaOJCS2024, ZheyuArxiv2024}, since the scattering matrix design itself does not require any specific model to be feasible. 
{The fading parameters and large-scale attenuation settings are selected consistently with the aforementioned \ac{SotA} methods to ensure meaningful performance comparisons. 
Importantly, the proposed scattering matrix design framework does not rely on any specific small-scale fading assumption.}}}.
{Specifically, some numerical results assume Rician fading with $K_f = 2\,\mathrm{dB}$, while others consider the Rayleigh case corresponding to $K_f=-\infty\,\mathrm{dB}$.}
{Lastly, t}he system is assumed to operate at $2.4$GHz, and each user experiences identical noise with a noise power $N_0 = -80\mathrm{dBm}$.
%%
%{\color{red}For clarity, the scattering matrix following Algorithm~\ref{alg:cga} is designed using channels $\mathbf{H}_{\mathrm{TX}}$ and $\mathbf{H}_{\mathrm{RX}}$, which incorporate the large-scale fading $\Upsilon(d)$, whereas the scattering matrix design following \ac{SotA} \cite{YahyaOJCS2024} is designed using the normalized channels $\mathbf{H}_{\mathrm{TX}}$ and $\mathbf{H}_{\mathrm{RX}}$, which do not incorporate the large-scale fading $\Upsilon(d)$.}

The first set of results is presented in Figure \ref{fig:manopt} to demonstrate the effectiveness of the proposed method in enhancing system performance.
Different architectures, such as the single-, group-, and fully-connected are considered, and results comparing the sum-rate performance achieved by the proposed and the \ac{SotA} {techniques} \cite{YahyaOJCS2024,ZheyuArxiv2024} are provided.
Notably, the results indicate that the proposed scattering matrix ``SM'', {combined with either uniform power allocation ``PA'', or a fixed MMSE-based beamformer ``BF'',} outperforms both \ac{SotA} {counterparts considered.}

{In particular, the interference nulling SM design with uniform PA \cite{YahyaOJCS2024} and the joint scattering and beamforming matrix design based on the \ac{pp-ADMM} framework \cite{ZheyuArxiv2024}, are outperformed with all group architectures, under Rayleigh fading conditions with $K_f = -\infty$\,dB and for a small number of users and \ac{TX} antennas at the \ac{BS}, i.e., $K = 2$ and $N = 2$.}
%%
% In particular, as seen in Figure \ref{fig:manopt_SC}, the proposed single-connected scattering matrix design surpasses both the \ac{SotA} single- and group-connected $(R_G = 2)$ architectures.
%
\vspace{-3ex}
\begin{figure}[H]
\centering
% \begin{tcolorbox}[colback=blue!20, boxrule=0pt, arc=2mm, left=2mm, right=2mm, top=2mm, bottom=2mm]
\includegraphics[width=0.98\linewidth]{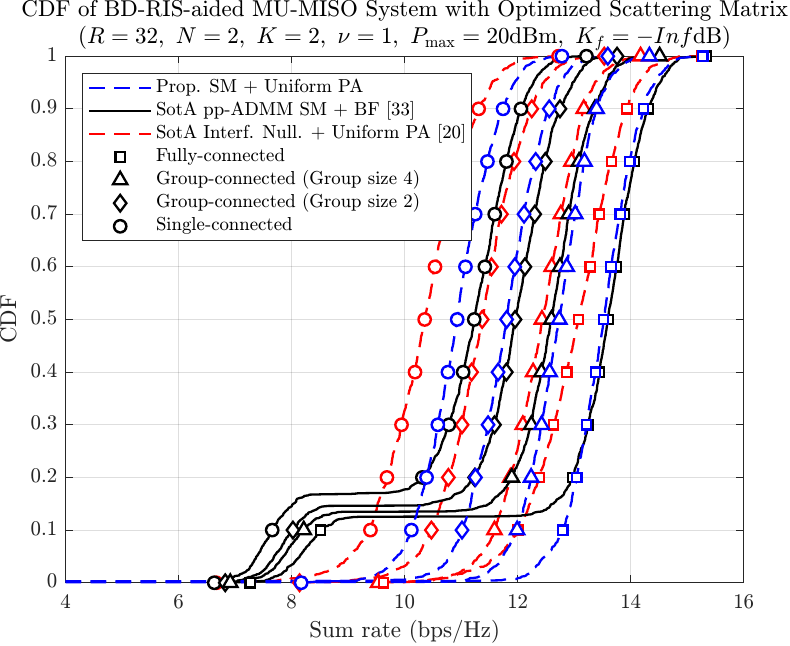}
\caption{CDF of sum-rate performance of the proposed vs. SotA \cite{YahyaOJCS2024,ZheyuArxiv2024} scattering matrix design with uniform power allocation, considering the fully-connected ``FC", group-connected ``GC", with group sizes of 2 and 4, ``GC(2)" and ``GC(4)", and the single-connected "SC" architecture.}
\label{fig:cdf}
\vspace{-2ex}
% \end{tcolorbox}
\vspace{-3ex}
\end{figure}

\begin{figure}[H]
\centering
% \begin{tcolorbox}[colback=blue!20, boxrule=0pt, arc=2mm, left=2mm, right=2mm, top=2mm, bottom=2mm]
\subfloat[Single-connected\label{fig:SM_BF_SC}]{
\includegraphics[width=.98\columnwidth]{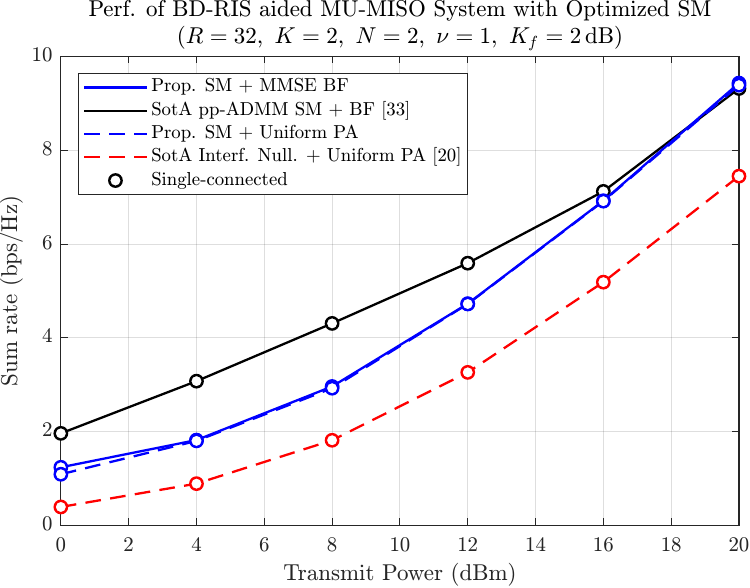}
} \\[1.5ex]
\subfloat[Group-connected\label{fig:SM_BF_GC}]{
\includegraphics[width=.98\columnwidth]{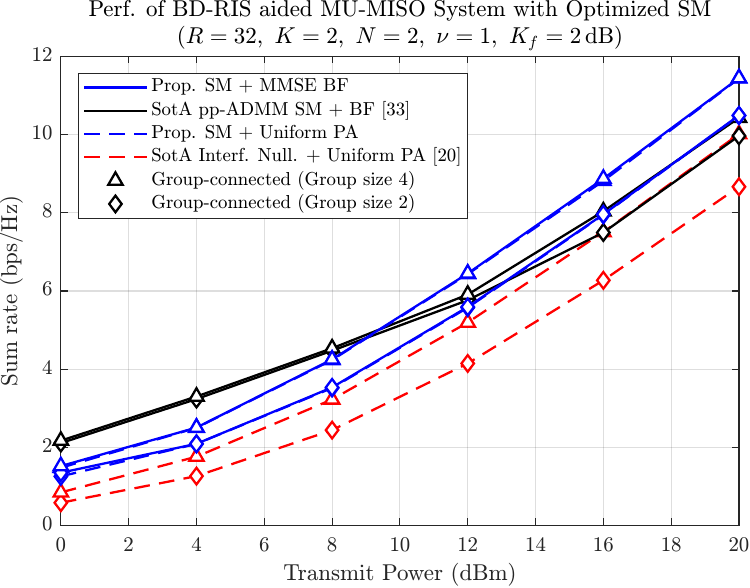}
} \\[1.5ex]
\subfloat[Fully-connected\label{fig:SM_BF_FC}]{
\includegraphics[width=.98\columnwidth]{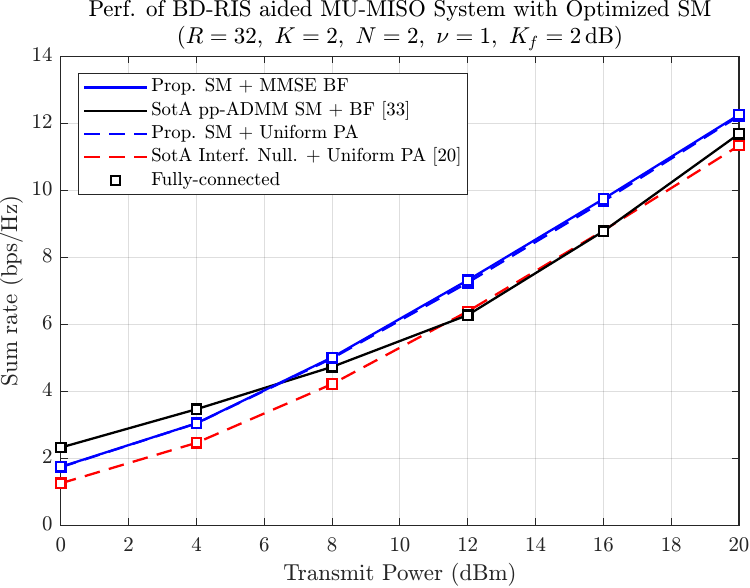}
}
% \vspace{-0.5ex}
\caption{Comparison of sum-rate performance using different beamforming schemes, considering the proposed and SotA \cite{YahyaOJCS2024,ZheyuArxiv2024} single-connected ``SC", group-connected ``GC", with group sizes of 2 and 4, ``GC(2)" and ``GC(4)", and the fully-connected ``FC" architecture.}
\label{fig:SM_BF}
% \end{tcolorbox}
\end{figure}

Complementing the assessment in terms of sum-rate performance, Figure \ref{fig:cdf} further shows the \ac{CDF} of the sum-rate performance of the proposed \ac{BD-RIS} scattering matrix design, alongside the \ac{SotA} methods \cite{YahyaOJCS2024,ZheyuArxiv2024} for all considered architectures, $i.e.$, single-, group- with group sizes of 2 and 4, and fully-connected. 

{More specifically, the group-connected architecture with group size 4, when using the proposed method, clearly outperforms the considered \ac{SotA} designs with the same architecture. The fully-connected architecture with the proposed method achieves, on average, slightly better performance than \cite{ZheyuArxiv2024}. Architectures with lower complexity, namely the single-connected architecture and the group-connected architecture with group size 2, achieve slightly lower performance than \cite{ZheyuArxiv2024}; however, all proposed architectures still outperform the \ac{SotA} interference nulling design in \cite{YahyaOJCS2024}.}

Next, Figure \ref{fig:SM_BF} presents a comparison of the sum-rate performance of the proposed scattering matrix ``SM'', {in combination with either uniform power allocation ``PA'', or a fixed MMSE-based beamformer ``BF'', against the corresponding \ac{SotA} schemes, namely the interference nulling SM design with uniform PA \cite{YahyaOJCS2024} and the joint scattering and beamforming matrix design based on the \ac{pp-ADMM} framework \cite{ZheyuArxiv2024}.

The comparison is conducted across all considered architectures under Ricean fading conditions with $K_f = 2$\,dB, and for a small number of users and \ac{TX} antennas at the \ac{BS}, specifically $K = 2$ and $N = 2$, highlighting the robustness of the proposed method across different fading conditions.}
{We remark that notable performance gains are particularly observed for the group-connected architecture, with group size 2, where the proposed method outperforms both \ac{SotA} methods, even those with a higher connectivity, by an order of 2, in the high transmit power regime.}

\begin{figure}[H]
\centering
% \begin{tcolorbox}[colback=blue!20, boxrule=0pt, arc=2mm, left=2mm, right=2mm, top=2mm, bottom=2mm]
\includegraphics[width=0.98\linewidth]{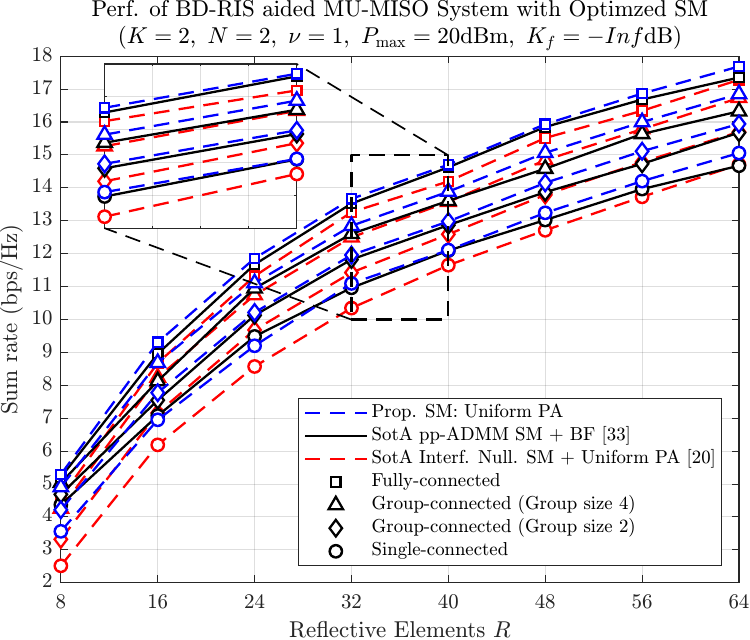}
\caption{Comparison of sum-rate performance of the proposed vs. \ac{SotA} \cite{YahyaOJCS2024 ,ZheyuArxiv2024} {methods} as a function of the number of \acp{RE}, i.e. $R$, considering the fully-connected ``FC", group-connected ``GC", with group sizes of 2 and 4, ``GC(2)" and ``GC(4)", and the single-connected "SC" architecture.}
\label{fig:sRvsN}
\vspace{-1ex}
% \end{tcolorbox}
\end{figure}

Finally, Figure \ref{fig:sRvsN} presents the sum-rate performance of the proposed method, {in combination with uniform power allocation ``PA",} in comparison to the {\ac{SotA} methods \cite{YahyaOJCS2024,ZheyuArxiv2024}}, as a function of the number of \acp{RE} $R$.
This result further builds upon the claim that for all \ac{BD-RIS} architectures considered, $i.e.$, single-, group-, and fully-connected, and under {the Rayleigh fading} scenario, the proposed scattering matrix consistently outperforms the \ac{SotA} benchmarks \cite{YahyaOJCS2024,ZheyuArxiv2024}.
Importantly, the performance gain is maintained as the number of reflective elements increases, which highlights the robustness of the proposed approach, and showcases that the proposed solution is scalable.

\vspace{-1ex}
\section{Conclusion}
\label{sec:conc}

A novel sum-rate maximization scheme for \ac{RBD-RIS}-aided \ac{MU}-\ac{MISO} systems is presented.
To this end, an optimization problem was formulated to determine the \ac{BD-RIS} scattering matrix configuration that maximizes the users' sum-rate, and solved using a Riemannian manifold steepest descent method with a closed-form gradient.
Simulation results validated the effectiveness of the proposed approach, showing consistent performance gains over the \ac{SotA}, where the improvements arise from jointly exploiting the geometric structure of the manifold and the direct sum-rate based problem formulation.  
{Additionally, the simulation results were compared against joint scattering matrix and beamforming designs. Under the considered scenarios, the proposed method outperformed the design with fixed beamforming, which opens up the possibility for beamforming optimization.

This is promising, as further beamforming design optimization is expected to provide even greater performance improvements.}
Overall, the results show that manifold-based scattering matrix optimization enables scalable and robust performance in \ac{RIS}-aided systems. 
Future work will extend the framework to include transmit beamforming design.
%
% In the results section, we highlight the comparison between the proposed single-connected architecture and the \ac{SotA} single- and group-connected $(N_G = 2)$ architectures.

% \section{Conclusion}
% \label{sec:conc}

% We presented a novel sum-rate maximization scheme for reciprocal \ac{BD-RIS}-aided \ac{MISO} systems.
% %
% To this end, we formulated an optimization problem, which was used to find the \ac{BD-RIS} scattering matrix configuration that maximizes the users sum-rate.
% %
% The problem was solved via a Riemannian manifold steepest descent technique with the gradient computed in closed-form.
% %
% Through simulation results we validated the effectiveness of the proposed approach, demonstrating that our proposed methods outperform~\ac{SotA} designs across multiple~\ac{BD-RIS} architectures. 
% %
% In the results section, we highlight the comparison between the proposed single-connected architecture and the \ac{SotA} single- and group-connected $(N_G = 2)$ architectures.

\appendices
\label{sec:appendix}

\vspace{-2ex}
\section{Proof of Lemma \ref{lem:gradAconjA}}
\label{app:lemma1}

\begin{lemma}
\label{lem:gradAconjA}
Consider $\mathbf{A} \in \mathbb{C}^{N \times N}$ with entries defined as
\vspace{-1ex}
\begin{equation}
\mathbf{A} = {\begin{pmatrix}
a_{1,1}  & \cdots & a_{1,N} \\[-1ex]
%a_{2,1} & \cdots & a_{2,N} \\
\vdots   & \ddots & \vdots  \\[-1ex]
a_{N,1}  & \cdots & a_{N,N}
\end{pmatrix}}.
\vspace{-1ex}
\end{equation}

Then
\vspace{-1ex}
\begin{equation}
\nabla_{\mathbf{A}}\mathrm{Tr}\{\mathbf{A}^{\star}\mathbf{A}\} = 2\frac{\partial\mathrm{Tr}\{\mathbf{A}^{\star}\mathbf{A}\}}{\partial\mathbf{A}^{\star}} = 2\mathbf{A}^T,
\vspace{-1ex}
\end{equation}
where
\begin{equation}
\mathrm{Tr}\{\mathbf{A}^{\star}\mathbf{A}\} = \sum_{i=1}^N\sum_{k=1}^N a_{i,k}^{\star}a_{k,i},
\end{equation}
with $a_{i,k}^{\star}$ and $a_{k,i} \in \mathbb{C}$, such that they can be written in the form $x=\Re{x}+j\Im{x}$.
\end{lemma}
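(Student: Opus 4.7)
The plan is to prove this identity using Wirtinger calculus, treating $\mathbf{A}$ and $\mathbf{A}^{\star}$ as independent variables. This is the standard approach for differentiating real-valued functions of complex matrices, and it yields the factor of $2$ in the identity $\nabla_{\mathbf{A}} f = 2\,\partial f / \partial \mathbf{A}^{\star}$ whenever $f$ is real-valued, as is the case here since $\mathrm{Tr}\{\mathbf{A}^{\star}\mathbf{A}\} = \|\mathbf{A}\|_F^2 \in \mathbb{R}$.

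First, I would expand the trace using the element-wise expression already provided in the statement, namely $\mathrm{Tr}\{\mathbf{A}^{\star}\mathbf{A}\} = \sum_{i=1}^N \sum_{k=1}^N a_{i,k}^{\star} a_{k,i}$. I would then compute the Wirtinger partial derivative with respect to an arbitrary entry $a_{p,q}^{\star}$ of $\mathbf{A}^{\star}$, treating the $a_{k,i}$ terms as constants. Only the summand with $(i,k) = (p,q)$ contributes, yielding
\begin{equation}
\frac{\partial\, \mathrm{Tr}\{\mathbf{A}^{\star}\mathbf{A}\}}{\partial a_{p,q}^{\star}} = a_{q,p}.
\end{equation}
Assembling these entries into a matrix, the $(p,q)$-element of $\partial\, \mathrm{Tr}\{\mathbf{A}^{\star}\mathbf{A}\}/\partial \mathbf{A}^{\star}$ equals $a_{q,p}$, which is exactly the $(p,q)$-entry of $\mathbf{A}^T$. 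Hence $\partial\, \mathrm{Tr}\{\mathbf{A}^{\star}\mathbf{A}\}/\partial \mathbf{A}^{\star} = \mathbf{A}^T$.

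To obtain the full gradient $\nabla_{\mathbf{A}} \mathrm{Tr}\{\mathbf{A}^{\star}\mathbf{A}\}$, I would invoke the Wirtinger gradient identity for real-valued scalar functions of a complex matrix, which states that $\nabla_{\mathbf{A}} f = 2\,\partial f/\partial \mathbf{A}^{\star}$. To make the exposition self-contained, I would briefly justify this identity by decomposing each entry $a_{i,k} = \Re\{a_{i,k}\} + j \Im\{a_{i,k}\}$, applying the chain rule to obtain $\partial f / \partial \Re\{a_{i,k}\}$ and $\partial f / \partial \Im\{a_{i,k}\}$ in terms of Wirtinger derivatives, and combining them according to the standard definition of the complex gradient. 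Multiplying the result from the previous step by $2$ then gives $\nabla_{\mathbf{A}} \mathrm{Tr}\{\mathbf{A}^{\star}\mathbf{A}\} = 2\mathbf{A}^T$, completing the proof.

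The argument is essentially routine bookkeeping; the only subtle point is the factor-of-$2$ convention arising from the independence of $\mathbf{A}$ and $\mathbf{A}^{\star}$ under Wirtinger calculus, which should be stated explicitly rather than left implicit. No other technical obstacles arise, since the trace has a particularly transparent bilinear structure in $(\mathbf{A}, \mathbf{A}^{\star})$.
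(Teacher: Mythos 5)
Your approach is essentially the same as the paper's: both are Wirtinger-calculus computations in which the conjugate derivative $\partial\,\mathrm{Tr}\{\mathbf{A}^{\star}\mathbf{A}\}/\partial\mathbf{A}^{\star}=\mathbf{A}^T$ is established and then doubled because the function is real-valued. You compute that conjugate derivative directly entry-by-entry from the double sum, whereas the paper first assembles the full matrices $\partial\phi/\partial\Re\mathbf{A}=2\Re\mathbf{A}^T$ and $\partial\phi/\partial\Im\mathbf{A}=2\Im\mathbf{A}^T$ and then combines them into the generalized and conjugate complex derivatives; your route is slightly more economical, the paper's makes the real/imaginary bookkeeping explicit. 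The key entry-wise step $\partial\phi/\partial a_{p,q}^{\star}=a_{q,p}$ is correct.

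There is, however, one genuine error in your justification of the factor of $2$: you assert that $\mathrm{Tr}\{\mathbf{A}^{\star}\mathbf{A}\}=\|\mathbf{A}\|_F^2$. That identity holds for $\mathrm{Tr}\{\mathbf{A}^{H}\mathbf{A}\}=\sum_{i,k}|a_{k,i}|^2$, not for $\mathrm{Tr}\{\mathbf{A}^{\star}\mathbf{A}\}=\sum_{i,k}a_{i,k}^{\star}a_{k,i}$; the two coincide only when $\mathbf{A}$ is symmetric, which is precisely what cannot be assumed here since the lemma is used to differentiate the symmetry penalty for a not-yet-symmetric $\mathbf{\Theta}_g$. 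The conclusion you need --- that the function is real-valued so that $\nabla_{\mathbf{A}}f=2\,\partial f/\partial\mathbf{A}^{\star}$ applies --- is still true, but for a different reason: by conjugation and cyclicity of the trace, $\overline{\mathrm{Tr}\{\mathbf{A}^{\star}\mathbf{A}\}}=\mathrm{Tr}\{\mathbf{A}\mathbf{A}^{\star}\}=\mathrm{Tr}\{\mathbf{A}^{\star}\mathbf{A}\}$, or equivalently via the paper's expansion
\begin{equation}
\mathrm{Tr}\{\mathbf{A}^{\star}\mathbf{A}\}=\sum_{i=1}^N|a_{i,i}|^2+2\Re\Bigl\{\sum_{i=1}^N\sum_{k=i+1}^N a_{i,k}^{\star}a_{k,i}\Bigr\}.
\end{equation}
Replace your Frobenius-norm claim with one of these arguments and the proof is complete.
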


\vspace{-1ex}
\emph{Proof:} For simplicity, let $\phi = \mathrm{Tr}\{\mathbf{A}^\star \mathbf{A}\}$.
Then,
\vspace{-1ex}
\begin{equation}
\frac{\partial \phi}{\partial \Re{\mathbf{A}}}  = {\begin{pmatrix}
\frac{\partial \phi}{\partial \Re a_{1,1}} & \cdots & \frac{\partial \phi}{\partial \Re a_{1,N}} \\[-1ex]
\vdots & \ddots & \vdots \\
\frac{\partial \phi}{\partial \Re a_{N,1}} & \cdots & \frac{\partial \phi}{\partial \Re a_{N,N}}
\end{pmatrix} =2\Re\mathbf{A}^T,}
%& = 2\Re{\begin{pmatrix}
%a_{1,1} & a_{2,1} & \cdots & a_{N,1} \\
%a_{1,2} & a_{2,2} & \cdots & a_{N,2} \\
%\vdots & \vdots & \ddots & \vdots \\
%a_{1,N} & a_{2,N} & \cdots & a_{N,N}
%\end{pmatrix}}\nonumber
\vspace{-1ex}
\end{equation}
and
\vspace{-1ex}
\begin{equation}
j\frac{\partial \phi}{\partial \Im{\mathbf{A}}}  = {\begin{pmatrix}
\frac{\partial \phi}{\partial \Im a_{1,1}}  & \cdots & \frac{\partial \phi}{\partial \Im a_{1,N}} \\[-1ex]
%\frac{\partial \phi}{\partial \Im a_{2,1}} & \frac{\partial \phi}{\partial \Im a_{2,2}} & \cdots & \frac{\partial \phi}{\partial \Im a_{2,N}} \\
\vdots  & \ddots & \vdots \\
\frac{\partial \phi}{\partial \Im a_{N,1}} & \cdots & \frac{\partial \phi}{\partial \Im a_{N,N}}
\end{pmatrix} =2j\Im\mathbf{A}^T.}
%& = 2j\Im{\begin{pmatrix}
%a_{1,1} & a_{2,1} & \cdots & a_{N,1} \\
%a_{1,2} & a_{2,2} & \cdots & a_{N,2} \\
%\vdots & \vdots & \ddots & \vdots \\
%a_{1,N} & a_{2,N} & \cdots & a_{N,N}
%\end{pmatrix}},\nonumber
\vspace{-1ex}
\end{equation}
\newpage

Straightforwardly, the Generalized Complex Derivative and Conjugate Complex Derivative {\cite[eq.(229), eq.(230)]{Petersen2008}} are obtained as
\vspace{-1ex}
\begin{eqnarray}
&\frac{\mathrm{d}\phi}{\mathrm{d}\mathbf{A}} = \frac{1}{2} \Big(2\Re\mathbf{A}^T-2j\Im\mathbf{A}^T\Big) = \mathbf{A}^H,&\\
&\frac{\mathrm{d}\phi}{\mathrm{d}\mathbf{A}^{\star}} = \frac{1}{2} \Big(2\Re\mathbf{A}^T+2j\Im\mathbf{A}^T\Big) = \mathbf{A}^T.&
\vspace{-1ex}
\end{eqnarray}

As evidence that $\phi$ is a real-valued function, the following is provided 
\vspace{-2ex}
\begin{equation}
\mathrm{Tr}\{\mathbf{A}^{\star}\mathbf{A}\} = \sum_i^N|a_{i,i}|^2 + 2\Re\Big\{\sum_i^N\sum_{k=i+1}^Na_{i,k}^{\star}a_{k,i}\Big\},
\vspace{-1ex}
\end{equation}
as such, the Complex Gradient Matrix {\cite[eq.(233)]{Petersen2008}} can be computed as follows
\vspace{-1ex}
\begin{equation}
\nabla_{\mathbf{A}} \mathrm{Tr}\{\mathbf{A}^{\star}\mathbf{A}\} =  2\frac{\partial\mathrm{Tr}\{\mathbf{A}^{\star}\mathbf{A}\}}{\partial\mathbf{A}^{\star}} = 2\mathbf{A}^T,
\vspace{-1ex}
\end{equation}
hence, \eqref{eq:Transpose} is proven. \hfill$\square$

\vspace{-3ex}
\section{Proof of Lemma \ref{lem:lem2}}
\label{app:lemma2}

\begin{lemma}
\label{lem:lem2}

Consider $\mathbf{a} \in \mathbb{C}^{K\times 1}$, $\mathbf{B} \in \mathbb{C}^{K\times K}$, and $\mathbf{c}\in\mathbb{C}^{K\times 1}$, such that $d = \mathbf{a}^T\mathbf{B}\mathbf{c}$, where $ d \in \mathbb{C}$ and
\vspace{-1ex}
\begin{equation}
d ={
\left(\! \begin{array}{c}
a_1\\[-1ex]
%a_2\\
\vdots \\[-1ex]
a_N
\end{array}\! \right)^T \!\!\!
\left( \!\begin{array}{cccc}
B_{1,1}  & \cdots & B_{1,N} \\[-1ex]
%B_{2,1} & B_{2,2} & \cdots & B_{2,N} \\
\vdots  & \ddots & \vdots \\
B_{N,1}  & \cdots & B_{N,N}
\end{array} \!\right) \!\!\!
\left( \!\begin{array}{c}
c_1 \\[-1ex]
%c_2 \\
\vdots \\[-1ex]
c_N
\end{array} \!\right).}
\vspace{-1ex}
\end{equation}

Consider the function $f(d)=|d|^2$. The gradient of $f(d)$ with respect to $\mathbf{B}$ is given as
\vspace{-1ex}
\begin{equation}
\nabla_{\mathbf{B}}|d|^2 = d\mathbf{a}^{\star}\mathbf{c}^{H},
\vspace{-1ex}
\end{equation}
where $\mathbf{a} = \Re\mathbf{a} + j\Im\mathbf{a}$, $\mathbf{B} = \Re\mathbf{B} + j\Im\mathbf{B}$ and $\mathbf{c} = \Re\mathbf{c} + j\Im\mathbf{c}$, where $\Re \backslash \Im \mathbf{a}$, $\Re \backslash \Im \mathbf{c} \in \mathbb{R}^{K \times 1}$ and $\Re \backslash \Im \mathbf{B} \in \mathbb{R}^{K \times K}$.
\end{lemma}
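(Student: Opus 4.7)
The plan is to mirror the Wirtinger-calculus structure already used in the proof of Lemma \ref{lem:gradAconjA}, but applied to the scalar quadratic form $|d|^2 = d\,d^\star$ where $d$ is linear in $\mathbf{B}$. The key observation is that $d=\mathbf{a}^T\mathbf{B}\mathbf{c}=\sum_{m,n} a_m B_{m,n} c_n$ depends only on the entries of $\mathbf{B}$ (not on $\mathbf{B}^\star$), while $d^\star=\sum_{m,n}a_m^\star B_{m,n}^\star c_n^\star$ depends only on $\mathbf{B}^\star$. This clean separation makes both partial derivatives trivial once the indexing convention from Lemma \ref{lem:gradAconjA} is in place.

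First, I would write $|d|^2=dd^\star$ and note that $|d|^2$ is real-valued, so the Complex Gradient Matrix reduces, exactly as in Appendix \ref{app:lemma1}, to
\begin{equation}
\nabla_{\mathbf{B}}|d|^2 = 2\,\frac{\partial |d|^2}{\partial \mathbf{B}^\star}.
\end{equation}
Next, I would use the product rule in the Wirtinger sense. Since $d$ is holomorphic in $\mathbf{B}$, we have $\partial d/\partial B_{m,n}^\star = 0$, and therefore
\begin{equation}
\frac{\partial |d|^2}{\partial B_{m,n}^\star} = d\,\frac{\partial d^\star}{\partial B_{m,n}^\star} = d\, a_m^\star c_n^\star.
\end{equation}
Assembling these entries into a matrix with $(m,n)$-entry $a_m^\star c_n^\star$ identifies the result as the outer product $\mathbf{a}^\star \mathbf{c}^H$, yielding the claimed expression (up to the factor of $2$ absorbed into the gradient convention used in the main text around \eqref{eq:gradN}--\eqref{eq:gradD}).

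As an alternative, I would offer a second, more pedagogical route that follows Lemma \ref{lem:gradAconjA} literally: split $\mathbf{B}=\Re\mathbf{B}+j\Im\mathbf{B}$, compute the two real matrices
\begin{equation}
\frac{\partial |d|^2}{\partial \Re B_{m,n}} = 2\Re\{d\, a_m^\star c_n^\star\}, \qquad \frac{\partial |d|^2}{\partial \Im B_{m,n}} = 2\Im\{d\, a_m^\star c_n^\star\},
\end{equation}
and then combine them through the Conjugate Complex Derivative
$\frac{\mathrm{d}|d|^2}{\mathrm{d}\mathbf{B}^\star}=\tfrac{1}{2}\bigl(\frac{\partial|d|^2}{\partial\Re\mathbf{B}}+j\frac{\partial|d|^2}{\partial\Im\mathbf{B}}\bigr)$ to recover $d\,\mathbf{a}^\star\mathbf{c}^H$. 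This parallels the structure of Appendix \ref{app:lemma1} and keeps the appendix stylistically consistent.

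The main obstacle, as in Lemma \ref{lem:gradAconjA}, is bookkeeping: one must carefully check that the $(m,n)$-entry $a_m^\star c_n^\star$ matches the outer product $\mathbf{a}^\star\mathbf{c}^H$ rather than its transpose $\mathbf{c}^\star\mathbf{a}^H$, and that the convention $(\partial f/\partial \mathbf{X})_{m,n}=\partial f/\partial X_{m,n}$ is applied consistently. Once that indexing is pinned down, the derivation is essentially a one-line application of Wirtinger calculus to a bilinear form, and no further analytical difficulty remains.
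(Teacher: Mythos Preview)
Your proposal is correct. Your ``alternative'' route---splitting $\mathbf{B}$ into real and imaginary parts, computing $\partial|d|^2/\partial\Re\mathbf{B}$ and $\partial|d|^2/\partial\Im\mathbf{B}$, and recombining via the conjugate complex derivative---is exactly what the paper does in Appendix~\ref{app:lemma2}; the only cosmetic difference is that the paper keeps the two terms $d^\star\mathbf{a}\mathbf{c}^T$ and $d\,\mathbf{a}^\star\mathbf{c}^H$ separate until the final recombination, whereas you collapse them immediately into $2\Re\{\cdot\}$ and $2\Im\{\cdot\}$. Your primary route, by contrast, is a genuine shortcut: exploiting that $d$ is holomorphic in $\mathbf{B}$ so that $\partial d/\partial\mathbf{B}^\star=0$ lets you bypass the real/imaginary decomposition entirely and arrive at $\partial|d|^2/\partial\mathbf{B}^\star=d\,\mathbf{a}^\star\mathbf{c}^H$ in one line. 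This buys brevity and makes the structure (bilinear form, Wirtinger product rule) more transparent, at the cost of assuming the reader is comfortable treating $\mathbf{B}$ and $\mathbf{B}^\star$ as independent variables; the paper's longer route is more self-contained for readers who only trust real-variable calculus.
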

\begin{proof}

Following the steps for calculating the gradient of complex valued functions, as described in {\cite[eq.(229), eq.(230), eq.(233)]{Petersen2008}}.
The derivatives with respect to the real and imaginary parts of $\mathbf{B}$ are calculated, as shown below
\vspace{-1ex}
\begin{equation}
\frac{\partial |d|^2}{\partial \Re \mathbf{B}} = \frac{\partial |d|^2}{\partial d} \frac{\partial d}{\partial \Re \mathbf{B}}  + \frac{\partial |d|^2}{\partial d^{\star}}\frac{\partial d^{\star}}{\partial \Re \mathbf{B}},
\vspace{-1ex}
\end{equation}
and
\vspace{-1ex}
\begin{equation}
\frac{\partial |d|^2}{\partial \Im \mathbf{B}} = \frac{\partial |d|^2}{\partial d} \frac{\partial d}{\partial \Im \mathbf{B}}  + \frac{\partial |d|^2}{\partial d^{\star}}\frac{\partial d^{\star}}{\partial \Im \mathbf{B}}.
\vspace{-1ex}
\end{equation}

Straightforwardly
{
\vspace{-1ex}
\begin{equation}
\frac{\partial |d|^2}{\partial d}  = d^{\star} \quad \text{and} \quad
\frac{\partial |d|^2}{\partial d^{\star}}  = d.
\end{equation}}
\vspace{-2ex}

Expanding $ d = \mathbf{a}^T\mathbf{B}\mathbf{c}$ and $d^{\star} = \mathbf{c}^{H}\mathbf{B}^{H}\mathbf{a}^{\star}$ as a product of complex numbers. The following derivatives are obtained
\vspace{-1ex}
\begin{eqnarray}
&\frac{\partial d}{\partial \Re \mathbf{B}}  = 
{
\begin{pmatrix}
\frac{\partial d}{\partial \Re B_{1,1}}  & \cdots & \frac{\partial d}{\partial \Re B_{1,N}} \\[-1ex]
%\frac{\partial d}{\partial \Re B_{2,1}} & \frac{\partial d}{\partial \Re B_{2,2}} & \cdots & \frac{\partial d}{\partial \Re B_{2,N}} \\
\vdots & \ddots & \vdots \\
\frac{\partial d}{\partial \Re B_{N,1}}  & \cdots & \frac{\partial d}{\partial \Re B_{N,N}}
\end{pmatrix}
%\\
%& = {\begin{pmatrix}
%a_1c_1 & a_1c_2 & \cdots & a_1c_N \\
%a_2c_1 & a_2c_2 & \cdots & a_2c_N \\
%\vdots & \vdots & \ddots & \vdots \\
%a_Nc_1 & a_Nc_2 & \cdots & a_Nc_N
%\end{pmatrix}}
= \mathbf{a}\mathbf{c}^T,}\\
&\frac{\partial d}{\partial \Im \mathbf{B}}  = 
{
\begin{pmatrix}
\frac{\partial d}{\partial \Im B_{1,1}}  & \cdots & \frac{\partial d}{\partial \Im B_{1,N}} \\[-1ex]
%\frac{\partial d}{\partial \Im B_{2,1}} & \frac{\partial d}{\partial \Im B_{2,2}} & \cdots & \frac{\partial d}{\partial \Im B_{2,N}} \\
\vdots & \ddots & \vdots \\
\frac{\partial d}{\partial \Im B_{N,1}}  & \cdots & \frac{\partial d}{\partial \Im B_{N,N}}
\end{pmatrix}
%\\
%& = j{\begin{pmatrix}
%a_1c_1 & a_1c_2 & \cdots & a_1c_N \\
%a_2c_1 & a_2c_2 & \cdots & a_2c_N \\
%\vdots & \vdots & \ddots & \vdots \\
%a_Nc_1 & a_Nc_2 & \cdots & a_Nc_N
%\end{pmatrix}}
= j\mathbf{a}\mathbf{c}^T.}&
\end{eqnarray}

Similarly, for $d^{\star}$
\begin{equation}
\color{black}
\frac{\partial d^{\star}}{\partial \Re \mathbf{B}} = \mathbf{a}^{\star}\mathbf{c}^H \quad \text{and} \quad
\frac{\partial d^{\star}}{\partial \Im \mathbf{B}} = -j\mathbf{a}^{\star}\mathbf{c}^H.
\end{equation}

Finally {we obtain}
\begin{eqnarray}
&\frac{\partial |d|^2}{\partial \Re \mathbf{B}} = d^{\star}\mathbf{a}\mathbf{c}^T + d\mathbf{a}^{\star}\mathbf{c}^H,&\\
&\frac{\partial |d|^2}{\partial \Im \mathbf{B}} = jd^{\star}\mathbf{a}\mathbf{c}^T -jd\mathbf{a}^{\star}\mathbf{c}^H,&\\
&\frac{\mathrm{d}|d|^2}{\mathrm{d}\mathbf{B}^{\star}} = \frac{1}{2}\Big(\frac{\partial |d|^2}{\partial \Re \mathbf{B}}+ j\frac{\partial |d|^2}{\partial \Im \mathbf{B}}\Big) = d\mathbf{a}^{\star}\mathbf{c}^H.&
\end{eqnarray}

As the absolute value squared of a complex number is a real value, $i.e.$, $|d|^2$ is a real function, the gradient of $f(d)$ with respect to $\mathbf{B}$ is given as
\vspace{-1ex}
\begin{eqnarray}
\nabla_{\mathbf{B}} |d|^2 = 2\frac{\mathrm{d}|d|^2}{\mathrm{d}\mathbf{B}^{\star}} = 2 d\mathbf{a}^{\star}\mathbf{c}^H = 2 \big(d^{\star}\mathbf{a}\mathbf{c}^T\big)^{\star},
\vspace{-1ex}
\end{eqnarray}
hence, {\eqref{eq:term1_grad} and~\eqref{eq:term2_grad}} are proven.
\end{proof}

% \bibliographystyle{IEEEtran}
% \bibliography{ref_old}

\vspace{-3ex}
% Generated by IEEEtran.bst, version: 1.14 (2015/08/26)

\end{document}